\newcommand{\beq}{\begin{equation}}
\newcommand{\eeq}{\end{equation}}
\newcommand{\benn}{\begin{equation*}}
\newcommand{\eenn}{\end{equation*}}
\newtheorem{theorem}{Theorem}{}
{}
{}
\begin{document}

\tikzset{
block/.style = {draw, fill=white!20, rectangle, thin, minimum height=2.0em, minimum width=1.50em},
tmp/.style  = {coordinate}, 
sum/.style= {draw, fill=white!20, circle, node distance=.5cm},
tip/.style = {->, >=stealth', thin, dashed},
input/.style = {coordinate},
output/.style= {coordinate},
pinstyle/.style = {pin edge={to-,thin,black}},
startstop/.style = {draw, rounded rectangle, text centered, draw=black,thick},
io/.style = {trapezium, trapezium left angle=70, trapezium right angle=110, text centered},
process/.style = {rectangle, text centered, draw=black,thick},
decision/.style = {diamond, text centered, draw=black,thick},
arrow/.style = {-{Stealth[scale=1.2]},rounded corners,thick}
}


\title{Robust delay-dependent LPV output-feedback blood pressure control with real-time Bayesian estimation}

\author{\au{S. Tasoujian} \au{S. Salavati} \au{M. Franchek} \au{K. Grigoriadis}}

\address{{Department of Mechanical Engineering, University of Houston, Houston, TX, USA, 77204}
\email{stasoujian@uh.edu}}

\begin{abstract}
\looseness=-1 Mean arterial blood pressure (MAP) dynamics estimation and its automated regulation could benefit the clinical and emergency resuscitation of critical patients. In order to address the variability and complexity of the MAP response of a patient to vasoactive drug infusion, a parameter-varying model with a varying time delay is considered to describe the MAP dynamics in response to drugs. The estimation of the varying parameters and the delay is performed via a Bayesian-based multiple-model square root cubature Kalman filtering approach. The estimation results validate the effectiveness of the proposed random-walk dynamics identification method using collected animal experiment data. Following the estimation algorithm, an automated drug delivery scheme to regulate the MAP response of the patient is carried out via time-delay linear parameter-varying (LPV) control techniques. In this regard, an LPV gain-scheduled output-feedback controller is designed to meet the MAP response requirements of tracking a desired reference MAP target and guarantee robustness against norm-bounded uncertainties and disturbances. In this context, parameter-dependent Lyapunov-Krasovskii functionals are used to derive sufficient conditions for the robust stabilization of a general LPV system with an arbitrarily varying time delay and the results are provided in a convex linear matrix inequality (LMI) constraint framework. Finally, to evaluate the performance of the proposed MAP regulation approach, closed-loop simulations are conducted and the results confirm the effectiveness of the proposed control method against various simulated clinical scenarios.
\end{abstract}

\maketitle

\section{Introduction}\label{sec:Introl}


The human body has inherent feedback loops to maintain homeostasis including the regulation of blood pressure that may fail to work properly under severe trauma or disease or due to the administration of certain drugs. For this purpose, mean arterial blood pressure (MAP) regulation of a patient to a desired target value is essential in many clinical and operative procedures in critical care, and has been a challenging aspect of emergency resuscitation. Mainly, two types of vasoactive drugs are being used to attain a target MAP in emergency resuscitation: (1) vasodilator drugs to decrease the MAP to a target value, like sodium nitroprusside (SNP) which reduces the tension in the blood vessel walls \cite{he1986multiple}, and (2) vasopressor drugs to increase the MAP to a target value, like phenylephrine (PHP) which stimulates the depressed cardiovascular system causing vasoconstriction \cite{neves2010phenylephrine}.

Typically, MAP control and regulation procedures in clinical care are carried out manually using a syringe or infusion pump with a manual titration by the medical personnel. In these cases, drug delivery and adjustment may not be precisely managed, which can lead to undesirable or potentially fatal consequences, such as, increased cardiac workload and cardiac arrest. Moreover, manual drug administration is a time-consuming and labor-intensive task and often is challenged by poor and sluggish performance. Further, inaccurate operator monitoring can lead to under- or over-resuscitation with potentially dangerous outcomes \cite{luspay2016adaptive2, kee2005prevention}. Accordingly, the automation of the vasoactive drug infusion via feedback control has been proposed as a potential remedy to tackle the mentioned challenges of manual drug administration \cite{bailey2005drug}. To address the automated MAP regulation problem, several approaches including fractional-order  proportional-integral (PI) control \cite{sondhi2015fractional}, \textcolor{black}{nonlinear proportional-integral-derivative (PID) digital control \cite{slate1982automatic}}, adaptive predictive control \cite{kashihara2004adaptive, hahn2002adaptive}, \textcolor{black}{robust multiple-model adaptive control \cite{malagutti2013robust},} switching robust control \cite{ahmed2016design},  reinforcement learning \cite{sandu2016reinforcement}, and more recently PID and loop-shaping control methods \cite{StasoujianRobust} have been considered.

Among the automated MAP control strategies, model-based approaches have the advantage of fast, accurate, and reliable drug administration in the face of model mismatch, disturbances and noise.  However, the main challenge is due to the considerable intra- and inter-patient variations in the physiological MAP response to the drug infusion implying model parameters variation over time for an individual, as well as, from patient-to-patient \cite{kashihara2004adaptive}. Therefore, due to such physiological and pharmacological variations, a mathematical model with fixed parameters is inadequate to capture an individual's MAP response dynamics. In this regard, in order to improve the automated closed-loop resuscitation strategies, parameter-varying blood pressure response modeling and real-time estimation of the model's time-varying parameters is of significant practical interest. On this basis, in the present study, a first-order model with a time-varying delay and time-varying gain and time constant is considered to characterize the MAP response to the infusion of the vasopressor drugs used to regulate blood pressure in critical hypotensive scenarios. 

  Traditional parameter estimation methods, such as the recursive least-squares algorithm and instrumental-variable methods have been examined for real-time parameter estimation \cite{arnsparger1983adaptive, rao2003experimental, ljung1983theory}. Specifically, variance models have been proposed to characterize the MAP response of patients to drug infusion. However, these methods fail to sufficiently address the pharmacological variability problem and often suffer from a slow convergence rate \cite{bailey2005drug, Craig2004}. In more recent work, \cite{luspay2014design} utilizes the extended Kalman filtering (EKF) method for the real-time parameter estimation of a MAP response model. Although this approach can provide real-time parameter identification of a patient's MAP response model,  the estimation can be inaccurate when the response is far away from the equilibrium point, since the EKF is based on local linearization \cite{simon2010kalman}. Moreover, the proposed parameter identification approach is not capable of providing a consistent estimate of the time-lag parameter of the first-order mathematical model. Thus, to overcome the various inherent limitations of the previously utilized estimation methods, in this work, we develop a multiple-model square-root cubature Kalman filter (MMSRCKF) as a novel real-time model parameter and time-delay estimation method of the MAP response dynamics. MMSRCKF is a Bayesian filtering approach that can provide precise estimation of the varying model parameters and addresses the stochasticity in the nonlinear model without a need for linearization. Additionally, a linear parameter-varying (LPV) gain-scheduling controller combined with the real-time model parameter estimation is proposed to enable automated closed-loop drug delivery to meet the MAP regulation objectives in critical patient resuscitation. 
  
Automated MAP regulation should be robust against physiological disturbances and be able to adapt to varying patient dynamics. The varying MAP response dynamics and the large input time-delay degrade the performance of the closed-loop system by affecting its damping characteristics and bandwidth. Time-domain methods based on Lyapunov-Krasovskii functionals and Lyapunov-Razumikhin functions, to assess the stability of linear time-invariant (LTI) time-delay systems have been examined in \cite{Fridman2014, Wu2010}.  Control of time-delay LPV systems has been studied in \cite{Mohammadpour2012, Briat2015, salavati2019reciprocal, tasoujian2017parameter}. The corresponding stability criteria fall into delay-dependent and delay-independent sufficient conditions where the former criterion is generally considered to be less conservative. Mean-square stability of stochastic LPV systems with delayed measurements has been studied in \cite{Zhang20142}. \textcolor{black}{The authors in \cite{wang2007gain}, derived delay-dependent sufficient conditions for the closed-loop stabilization of LPV systems with input delay. A transformaton based on the maximum value of the delay is used to recast the original system into a more tractable form. A gain-scheduled static state-feedback controller is then designed to meet the performance requirements.} In another work, a robust static gain-scheduled controller design for discrete-time polytopic LPV systems with a state delay is formulated in a delay-independent matrix inequality framework in \cite{Rosa2018}. Dilated delay-dependent linear matrix inequalities (LMIs) for the control of state-delay polytopic LPV systems has been addressed in \cite{Nejem2018}. Through this method, the coupling between controller matrices and Lyapunov matrix functions is avoided and a gain-scheduled dynamic output feedback controller with memory is designed to reject disturbances. For the LPV MAP response control problem, \cite{Luspay2015} proposed an LPV control framework which uses Pad\'{e} approximation to transform the infinite-dimensional time-delay model into a non-minimum phase rational transfer function. The dynamics of the MAP response is assumed to be fully known; however, parametric uncertainties are unavoidable in realistic conditions. 

In the present paper, the model is assumed to be subject to varying parameters, varying time-delay, norm-bounded uncertainties and disturbances that impair the response of the closed-loop system to track a reference MAP profile. Hence, a robust time-delayed LPV gain-scheduled dynamic output-feedback controller is designed to guarantee robustness and tracking performance of the closed-loop system. The LMI framework is adopted to result in controller synthesis conditions in a convex and tractable setting using a Lyapunov-Krasovskii functional approach. Finally, the proposed robust LPV  control design method in conjunction with the MMSRCKF parameter estimation tool is validated via simulations.  Simulation results utilizing collected animal experiment data and a patient simulation model demonstrate the superiority and effectiveness of the control and estimation strategies to achieve MAP reference tracking, disturbance rejection, noise attenuation, and parametric uncertainty compensation.  

The notation to be used in the paper is standard and as follows. $\mathbb{R}$ denotes the set of real numbers, $\mathbb{R}_{+}$ is the set of non-negative real numbers, and $\mathbb{R}^n$ and $\mathbb{R}^{k \times m}$ are used to denote the set of real vectors of dimension $n$ and the set of real $k \times m$ matrices, respectively. $\mathbb{S}^{n}$ and $\mathbb{S}^{n}_{++}$ represent the set of real symmetric and real symmetric positive definite $n \times n$ matrices, respectively. $\mathbf{M} \succ \mathbf{0}$ shows the positive definiteness of the matrix $\mathbf{M}$. The inverse and transpose of a real matrix $\mathbf{M}$ are designated by $\mathbf{M}^{\text{T}}$ and $\mathbf{M}^{-1}$, respectively. $He [\mathbf{M}]$ is Hermitian operator defined as $He [\mathbf{M}] = \mathbf{M} + \mathbf{M}^{\text{T}}$. 
Also, In a symmetric matrix, the asterisk $\star$ in the $(i,\: j)$ element shows transpose of the $(j,\: i)$ element. $\mathscr{C} (J,\: K)$ stands for the set of continuous functions mapping a set $J$ to a set $K$. For a stochastic process, $\mathbf{x}_{k}$, $\mathscr{E}[\mathbf{x}_{k}]$ denotes its expected value and $\mathscr{N}\{\mathbf{x}_k;\widehat{\mathbf{x}}_{k|k},\mathbf{P}_{k|k} \}$ represents a normal Gaussian probability distribution with the mean of $\widehat{\mathbf{x}}_{k|k}$ and the covariance of $\mathbf{P}_{k|k}$.


The outline of the paper is as follows. Section \ref{sec:ProbForm} presents the mathematical description of the blood pressure dynamical model. The MMSRCKF parameter identification method is introduced in section \ref{sec:Estimation}, followed by the estimation results in section \ref{sec:EstimationResults}. In section \ref{sec:Control}, the LPV model of the MAP response is introduced and the robust time-delayed LPV gain-scheduling control design is described. Section \ref{sec:Results} outlines the simulation results and presents the evaluation of the performance of the proposed controller. Final remarks are provided in section \ref{sec:Conclusion}.

\section{MAP drug response model}\label{sec:ProbForm}

In this paper in line with the previous work in the literature (see \cite{cao2017simulator, sandu2016reinforcement,luspay2016adaptive, StasoujianRobust}) a first-order model with a time delay is considered to describe the patient's MAP response to the infusion of a vasoactive drug, such as phenylephrine (PHP), \textit{i.e.}

\beq
T(t) \cdot \dot{\Delta MAP}(t) + \Delta MAP(t) = K(t) \cdot u(t-\tau(t)),
\label{eq:MAP response TF}
\eeq
where  $\Delta MAP(t)$  stands for the MAP variations in $mmHg$ from its baseline value, \textit{i.e.} $\Delta MAP (t)= MAP (t) -MAP_b(t)$, $u(t)$ is the drug delivery rate in $ml/h$, $K(t)$ denotes the patient's sensitivity to the drug, $T(t)$ is the lag time representing the uptake, distribution and biotransformation of the drug \cite{isaka1993control}, and $\tau (t)$ is the time delay for the drug to reach the circulatory system from the infusion pump. This first-order model seems to properly capture a patient's physiological response to the PHP drug injection. Figure \ref{fig:MAPresponse} presents a typical MAP response due to a step PHP infusion versus a matched response of (\ref{eq:MAP response TF}). The figure also shows the interpretation of the model parameters $K(t)$, $T(t)$, $\tau(t)$, $MAP_b(t)$ which have been obtained to fit the MAP response using a least-squares optimization method. Data is collected from swine experiments performed at the Resuscitation Research Laboratory at the University of Texas Medical Branch (UTMB), Galveston, Texas \cite{luspay2016adaptive}. Although the proposed model structure (\ref{eq:MAP response TF}) is qualitatively able to represent the characteristics of the MAP response to the infusion of PHP, the model parameters vary considerably over time due to the  variability of patients' pharmacological response to the vasoactive drug infusion. That is, the model parameters and delay could vary significantly from patient-to-patient (inter-patient variability), as well as, for a given patient over time (intra-patient variability) \cite{isaka1993control, rao2003experimental}. 

\begin{figure}[!t] 
\hspace*{-.1in}
\centering \includegraphics[width=1.05\columnwidth, height=2.00in]{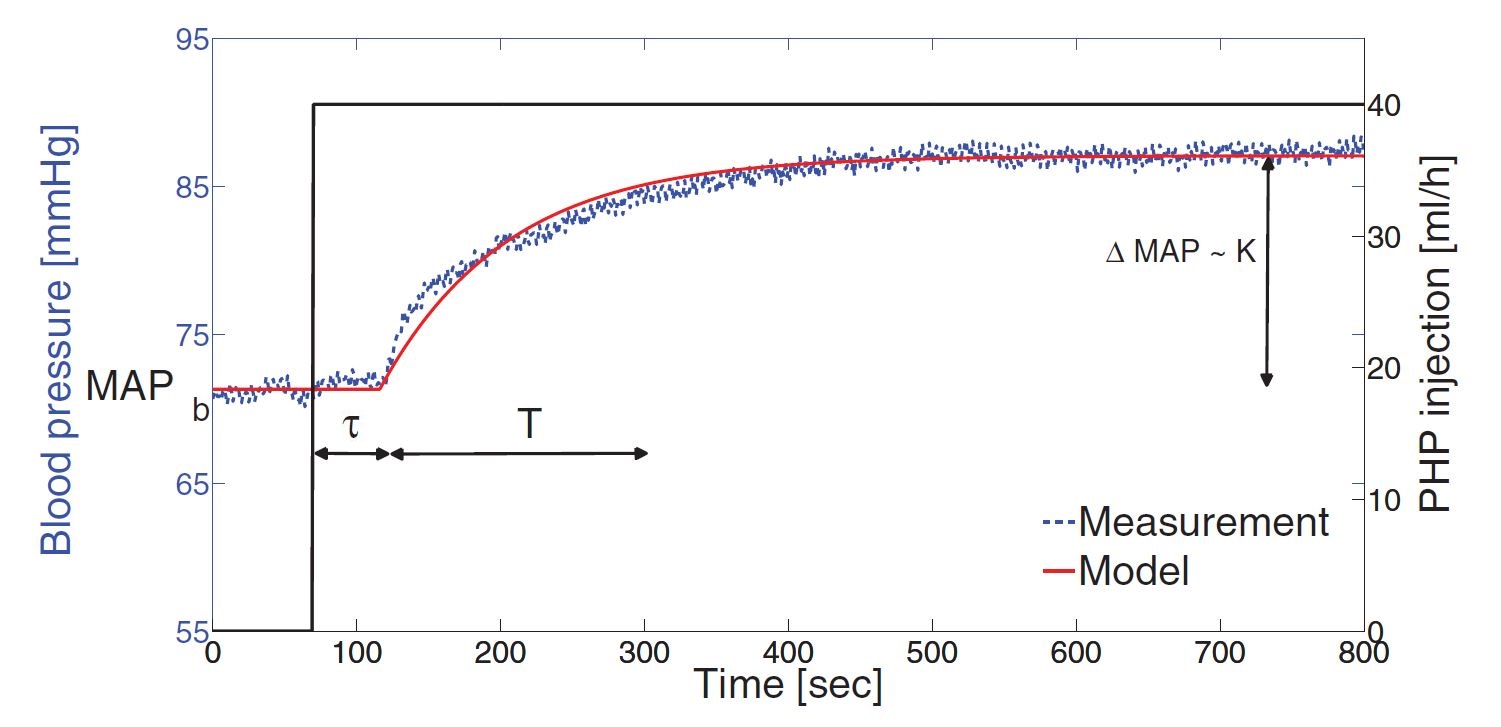} 
\caption{Typical MAP response due to step vasopressor drug infusion} 
\label{fig:MAPresponse}
\end{figure}

In the next section, a multiple-model square-root cubature Kalman filter (MMSRCKF) estimation algorithm is proposed and validated for the online estimation of the MAP response model parameters.

\section{Estimation preliminaries and methodology}\label{sec:Estimation}

		To implement the estimation framework, the continuous-time model (\ref{eq:MAP response TF}) is discretized at a sampling rate of $T_s$. Thus, the governing dynamics in discrete-time is given by
		\beq
			\left\{\begin{array}{l}
				x_{k+1} = \big(1-\dfrac{T_s}{T_k} \big) x_k + \dfrac{K_k T_s}{T_k} u_{(k-\frac{\tau_k}{T_s})},\\ [5pt]
				y_k = x_k + MAP_{b_k},
				\end{array}\right.
		\label{eq:DiscretizedEq}
		\eeq

		\noindent where $x_k = \Delta MAP_k = MAP_k - MAP_{b_k}$ at the $k$th time interval.  The state equation (\ref{eq:DiscretizedEq}) is augmented with the parameters to be estimated, namely $K_k, T_k,$ and $MAP_{b_k}$ to form an augmented state vector by assuming local random-walk dynamics. The state vector to be estimated is thus given by
		\begin{align}
			\mathbf{X}_k &= [\begin{array}{cccc}
								{X}_k^1 & {X}_k^2 & {X}_k^3 & {X}_k^4\end{array}]^{\text{T}}\nonumber\\
								& = [\begin{array}{cccc}
								{\Delta MAP}_k & K_k & T_k & MAP_{b_k}\end{array}]^{\text{T}}.
		\end{align}
		Since all model parameters are time-varying and assumed to be a \textit{priori} unknown, (\ref{eq:DiscretizedEq}) represents a nonlinear equation with regards to the state vector, $\mathbf{X}_k$, which can be expressed as the following nonlinear dynamics
		\beq
			\left\{\begin{array}{l}
			X_{k+1}^1 = \mathbf{f}_k(\mathbf{X}_k,u_k) + w_k,\\[3pt]
			y_k = h_k(\mathbf{X}_k) + v_k,
			\end{array}\right.
		\eeq
	with 
		\beq
			\left\{\begin{array}{l}
			f_k^1(\mathbf{X}_k,u_k) = \big(1-\dfrac{T_s}{X_k^3} \big) X_k^1 + \dfrac{T_s X_k^2}{X_k^3} u_{(k-\frac{\tau_k}{T_s})},\\[6pt]
			h_k(\mathbf{X}_k) = X_k^1 + X_k^4.
\end{array}			 \right.
		\eeq
	The process noise, $w_k$, and the measurement noise, $v_k$, are both assumed additive and statistically independent zero-mean Gaussian processes with covariances given by $\mathbf{Q}_{k}$ and ${R}_{k}$, respectively. As a consequence, linear regression methods like recursive least-squares and instrumental variables may fail in the efficient estimation of the parameters. Other local-approximation methods such as EKF require the model to be mildly nonlinear to be approximated via the first-order Taylor series. Moreover, partial derivatives of the nonlinear state-space model, \textit{i.e.} the Jacobians, must be computed which is not always viable. Therefore, these limitations motivated the use of a Bayesian-based filtering approach based on the cubature Kalman filter (CKF) through which the system's intrinsic nonlinear dynamics is employed directly \cite{haykin2009neural}. Although such an augmentation facilitates the estimation procedure, the time-varying input delay cannot be included in the augmented state vector or captured by a random walk process. Thus, it is computed through a multiple-model hypothesis testing process along with the CKF, which will be discussed later. 
	
\subsection{Square-root CKF }
	In the Bayesian-based CKF method, a probability approach is followed to the state estimation of dynamic systems \cite{haykin2009neural}. Due to the fact that accumulated numerical errors can lead to an indefinite error covariance matrix, square-root CKF (SRCKF) will be examined to overcome this problem. In this method, the covariance matrix is decomposed using a factorization method, such as the Cholesky factorization \cite{loehr2014advanced}. Then, the third-degree spherical-radial rule is used to approximate the multidimensional integrals involved in the Bayesian filtering \cite{jia2013high}. 
 Consider the following general nonlinear discrete-time stochastic system

	\beq
		\left\{\begin{array}{l}
		\mathbf{x}_{k+1} = \mathbf{f}(\mathbf{x}_k,\mathbf{u}_k) + \mathbf{w}_k,\\
		\mathbf{y}_k = \mathbf{h}(\mathbf{x}_k,\mathbf{u}_k) + \mathbf{v}_k,\; k=0,1,\ldots, k_f,
		\end{array}\right.
	\eeq
	
\noindent where $\mathbf{x}_k \in \mathbb{R}^n$ is the state vector or the unmeasurable states of the system, $\mathbf{u}_k \in \mathbb{R}^{n_u}$ is the input vector, and $\mathbf{y}_k \in \mathbb{R}^{n_y}$ is the measurement vector at the time $k$, and $k_f$ is the final time. The mappings $\mathbf{f}(\mathbf{x}_k,\mathbf{u}_k): (\mathbb{R}^n,\mathbb{R}^{n_u}) \mapsto \mathbb{R}^n$ and $\mathbf{h}(\mathbf{x}_k,\mathbf{u}_k): (\mathbb{R}^n,\mathbb{R}^{n_u}) \mapsto \mathbb{R}^{n_y}$ are known and the vectors $\mathbf{w}_k \in \mathbb{R}^n$ and $\mathbf{v}_k \in \mathbb{R}^{n_y}$ denote the process and measurement noise, respectively and are assumed mutually independent. The probability distribution functions (PDFs) of the noise, namely $p(\mathbf{w}_k)$ and $p(\mathbf{v}_k)$ are assumed to be known, as well as, the initial state PDF given by $p(\mathbf{x}_0)$.

CKF seeks to find the estimation of the state vector in the form of a conditional PDF, $p(\mathbf{x}_k|\mathbf{y}^k)$ where $\mathbf{y}^k=[\begin{array}{cccc} \mathbf{y}_0 & \mathbf{y}_1 & \ldots & \mathbf{y}_k\end{array}]$ denotes the vector of the measurements. However, in some cases, a Gaussian approximation of the conditional PDF allows to only compute the first two conditional moments, \textit{i.e.} the mean $\widehat{\mathbf{x}}_{k|k} = \mathscr{E}[\mathbf{x}_k|\mathbf{y}^k]$ and the error covariance matrix $\mathbf{P}_{k|k} = cov[\mathbf{x}_k|\mathbf{y}^k]$ which results in $p(\mathbf{x}_k|\mathbf{y}^k) \approx \mathscr{N}\{\mathbf{x}_k;\widehat{\mathbf{x}}_{k|k},\mathbf{P}_{k|k} \}$.

The third-degree spherical-radial rule is utilized in the CKF procedure to compute the moment integrals. Consequently, if the noise signal enters the system as Gaussian white noise, the prediction step (state prediction) and correction step (measurement update) are carried out via integrating a nonlinear function with regards to a normal distribution, that is

		{\small
		\begin{align}
			\widehat{\mathbf{x}}_{k+1|k}\! & =\! \mathscr{E}[\mathbf{x}_{k+1}|\mathbf{y}^k] \!\! =\!\! \int_{\mathbb{R}_n}\!\!\!\!\! \mathbf{f}(\mathbf{x}_k,\mathbf{u}_k) p(\mathbf{x}_{k}|\mathbf{y}^k) \text{d}\mathbf{x}_k\nonumber\\
			& \!\! \approx \int_{\mathbb{R}_n}\!\!\!\!\! \mathbf{f}(\mathbf{x}_k,\mathbf{u}_k) \mathscr{N}\{\mathbf{x}_k;\widehat{\mathbf{x}}_{k|k},\mathbf{P}_{k|k} \}\text{d}\mathbf{x}_k,\\
			\widehat{\mathbf{y}}_{k+1|k} \! & = \! \mathscr{E}[\mathbf{y}_{k+1}|\mathbf{x}_{k+1}]\!\! =\!\! \int_{\mathbb{R}_n}\!\!\!\!\!\! \mathbf{h}(\mathbf{x}_{k+1},\mathbf{u}_{k+1}) p(\mathbf{y}_{k+1}|\mathbf{x}_{k+1}) \text{d}\mathbf{x}_{k+1}\nonumber\\
			& \!\! \approx \int_{\mathbb{R}_n}\!\!\!\!\! \mathbf{h}(\mathbf{x}_{k+1},\mathbf{u}_{k+1}) \mathscr{N}\{\mathbf{x}_{k+1};\widehat{\mathbf{x}}_{k+1|k},\mathbf{P}_{k+1|k} \}\text{d}\mathbf{x}_{k+1}.
		\end{align}
	}Next, for an arbitrary function $g(\mathbf{x})$ with $\boldsymbol{\Sigma}$ as the covariance of $\mathbf{x}$, the integral
\begin{small}
		\beq
			I(g) = \sqrt{2\pi} \vert\boldsymbol{\Sigma} \vert^{-\frac{1}{2}}\int_{\mathbb{R}^n}\!\!\! g(\mathbf{x})exp\left[-\dfrac{1}{2}(\mathbf{x}-\boldsymbol{\mu})^\text{T} \boldsymbol{\Sigma}^{-1}(\mathbf{x}-\boldsymbol{\mu}) \right]\text{d}\mathbf{x},
		\eeq
\end{small}\noindent can be expressed in the spherical coordinate system as 
		\beq
			I(g) = (2\pi)^{-\frac{n}{2}} \int_{r=0}^\infty \int_{\mathbb{U}_n} g(\mathbf{C}r\mathbf{z}+\boldsymbol{\mu})\text{d}\mathbf{z} r^{n-1}e^{-\frac{r^2}{2}} \text{d}r,
		\eeq
	where $\mathbf{x} = \mathbf{C}r\mathbf{z}+\boldsymbol{\mu}$ with $\Vert\mathbf{z} \Vert =1$, $\boldsymbol{\mu}$ is the mean and $\mathbf{C}$ is the Cholesky factor of the covariance, $\boldsymbol{\Sigma}$, and $\mathbb{U}_n$ is the unit sphere. Then, the symmetric spherical cubature rule is used to further approximate the integral through the following relation
		\beq
			I(g) = \dfrac{1}{2n} \sum\limits_{i=0}^{2n} g(\sqrt{n}(\mathbf{C}{\xi}_i+\boldsymbol{\mu})),
		\eeq
	where ${\xi}_i$ denotes the $i$th cubature point at the intersection of the unit sphere and its axes. The main advantage of this method is that the cubature points are obtained off-line using a third-degree cubature rule \cite{liu2014adaptive}. Hence, one can use the following steps to compute state estimation using SRCKF.

\subsubsection*{SRCKF algorithm}
\begin{enumerate}
	\item {\bf Initialization}: The state initial condition is given by $\mathbf{x}_{0|0}\equiv \mathbf{x}_{0}$ with $\widehat{\mathbf{x}}_0=\mathscr{E}[\mathbf{x}_0]$ where the initial covariance matrix is $\mathbf{P}_{0|0}$ which is decomposed as $\mathbf{P}_{0|0} = \mathbf{S}_{0|0}\mathbf{S}_{0|0}^\text{T}$ through Cholesky factorization, \textit{i.e.}
		\benn
			\mathbf{S}_{0|0} = chol\{[\mathbf{x}_0 - \widehat{\mathbf{x}}_0][\mathbf{x}_0 - \widehat{\mathbf{x}}_0]^\text{T}\}.
		\eenn
		Then, the cubature points, $\xi_i$, and the weights, $w_i = w = \dfrac{1}{2n}$, are set for $i=1,2,\ldots,2n$.

\item \textbf{Time update (Prediction)} $(k=1,2,\ldots, k_f)$:
		\begin{enumerate}
			\item Evaluation of the cubature points
				\beq
					\mathbf{X}_{i,k-1|k-1} = \mathbf{S}_{k-1|k-1} \xi_i + \widehat{\mathbf{x}}_{k-1|k-1}.
				\eeq
			\item Evaluation of the propagated cubature points via the system dynamics
				\beq
					\mathbf{X}_{i,k|k-1}^* = \mathbf{f}_k(\mathbf{X}_{i,k-1|k-1},\mathbf{u}_{k-1}).
				\eeq
			\item Evaluation of the predicted states based on the weights and propagated points
				\beq
					\widehat{\mathbf{x}}_{k|k-1} = \sum\limits_{i=1}^{2n} w_i \mathbf{X}_{i,k|k-1}^*.
				\eeq
			\item Evaluation of the square-root of the covariance of the predicted state error covariance
				\beq
					\mathbf{S}_{k|k-1} = triangle\big\{[\boldsymbol{\chi}_{k|k-1}^*, \mathbf{S}_{\mathbf{Q}_{k-1}} ]\big\},
					\label{eq:S}
				\eeq
			where $\mathbf{B}=triangle\{\mathbf{A}\}$ stands for a general triangularization algorithm, \textit{e.g.} QR decomposition, where $\mathbf{B}$ is a lower triangular matrix. If $\mathbf{C}$ is an upper triangular matrix obtained through the QR decomposition of $\mathbf{A}^\text{T}$, then the lower triangular matrix is given by $\mathbf{B} = \mathbf{C}^{\text{T}}$. In (\ref{eq:S}), $\boldsymbol{\chi}_{k|k-1}^*$ is a centered, weighted matrix given by
				\begin{align}
					& \boldsymbol{\chi}_{k|k-1}^* = \dfrac{1}{\sqrt{2n}}[\mathbf{X}_{1,k|k-1}^* - \widehat{\mathbf{x}}_{k|k-1}\nonumber\\
										&\quad \begin{array}{ccc}
															 \mathbf{X}_{2,k|k-1}^* - \widehat{\mathbf{x}}_{k|k-1} &\cdots & \mathbf{X}_{2n,k|k-1}^* - \widehat{\mathbf{x}}_{k|k-1}					
\end{array}].
				\end{align}
				$\mathbf{S}_{\mathbf{Q}_{k-1}}$ is the square-root of the the process noise such that $\mathbf{Q}_{k-1}=\mathbf{S}_{\mathbf{Q}_{k-1}} \mathbf{S}_{\mathbf{Q}_{k-1}}^\text{T}$.
		\end{enumerate}
		
	\vspace{1mm}	
	\item \textbf{Measurement update (Correction)} $(k=1,2,\ldots, k_f)$:
		\begin{enumerate}
			\item Evaluation of the cubature points
				\beq
					\mathbf{X}_{i,k|k-1} = \mathbf{S}_{k|k-1} \xi_ i + \widehat{\mathbf{x}}_{k|k-1}.
				\eeq
			\item Evaluation of the propagated cubature point via the output dynamics
				\beq
					\mathbf{Y}_{i,k|k-1} = \mathbf{h}(\mathbf{X}_{i,k|k-1},\mathbf{u}_k).
				\eeq
			\item Estimation of the predicted measurement
				\beq
					\widehat{\mathbf{y}}_{k|k-1} = \sum\limits_{i=1}^{2n} w_i \mathbf{Y}_{i,k|k-1}.
				\eeq
			\item Evaluation of the square-root of the innovation covariance matrix
				\beq
					\mathbf{S}_{yy,k|k-1} = triangle\big\{[\mathbf{Y}_{k|k-1}, \mathbf{S}_{\mathbf{R}_{k}}]\big\},					
				\eeq
			where $\mathbf{Y}_{k|k-1}$ is a centered, weighted matrix given by
				\begin{align}
					& \mathbf{Y}_{k|k-1} = \dfrac{1}{\sqrt{2n}}[\mathbf{Y}_{1,k|k-1} - \widehat{\mathbf{y}}_{k|k-1}\nonumber\\
															& \quad	\begin{array}{ccc}
																\mathbf{Y}_{2,k|k-1} - \widehat{\mathbf{y}}_{k|k-1} &\cdots & \mathbf{Y}_{2n,k|k-1} - \widehat{\mathbf{y}}_{k|k-1}					
\end{array}].
				\end{align}
				$\mathbf{S}_{\mathbf{R}_{k}}$ is also the square-root of the the measurement noise such that $\mathbf{R}_{k}=\mathbf{S}_{\mathbf{R}_{k}} \mathbf{S}_{\mathbf{R}_{k}}^\text{T}$.	
				\vspace{1mm}					
			\item Evaluation of the cross-covariance matrix
				\beq
					\mathbf{P}_{xy,k|k-1} = \boldsymbol{\chi}_{k|k-1}\mathbf{Y}_{k|k-1}^\text{T},
				\eeq
			with the centered, weighted matrix $\boldsymbol{\chi}_{k|k-1}$ obtained by
				\begin{align}
					& \boldsymbol{\chi}_{k|k-1} = \dfrac{1}{\sqrt{2n}}[\mathbf{X}_{1,k|k-1} - \widehat{\mathbf{x}}_{k|k-1} \nonumber\\
					&\quad \begin{array}{ccc}
																\mathbf{X}_{2,k|k-1} - \widehat{\mathbf{x}}_{k|k-1} &\cdots & \mathbf{X}_{2n,k|k-1} - \widehat{\mathbf{x}}_{k|k-1}					
\end{array}].
				\end{align}
			\item Evaluation of the SRCKF filter gain
				\beq
					\mathbf{W}_k = \mathbf{P}_{xy,k|k-1} \mathbf{S}_{yy,k|k-1}^{-\text{T}} \mathbf{S}_{yy,k|k-1}^{-1}. 
				\eeq
			\item Evaluation of the corrected state update based on the measurement
				\beq
					\widehat{\mathbf{x}}_{k|k} = \widehat{\mathbf{x}}_{k|k-1} + \mathbf{W}_k (\mathbf{y}_k - \widehat{\mathbf{y}}_{k|k-1}).
				\eeq
			\item Evaluation of the square-root of the corrected error covariance matrix
				\beq
					\mathbf{S}_{k|k} = triangle\big\{[\boldsymbol{\chi}_{k|k-1}-\mathbf{W}_k\mathbf{Y}_{k|k-1}, \mathbf{W}_k\mathbf{S}_{\mathbf{R}_k}]\big\}.
				\eeq
		\end{enumerate}
\end{enumerate}

\noindent The state estimation process continues iteratively from the second step of the algorithm, \textit{i.e.} the time update (prediction) by setting $k = k + 1$. The flowchart depicting the SRCKF algorithm is shown in Fig. \ref{fig:SRCKF flowchart}.

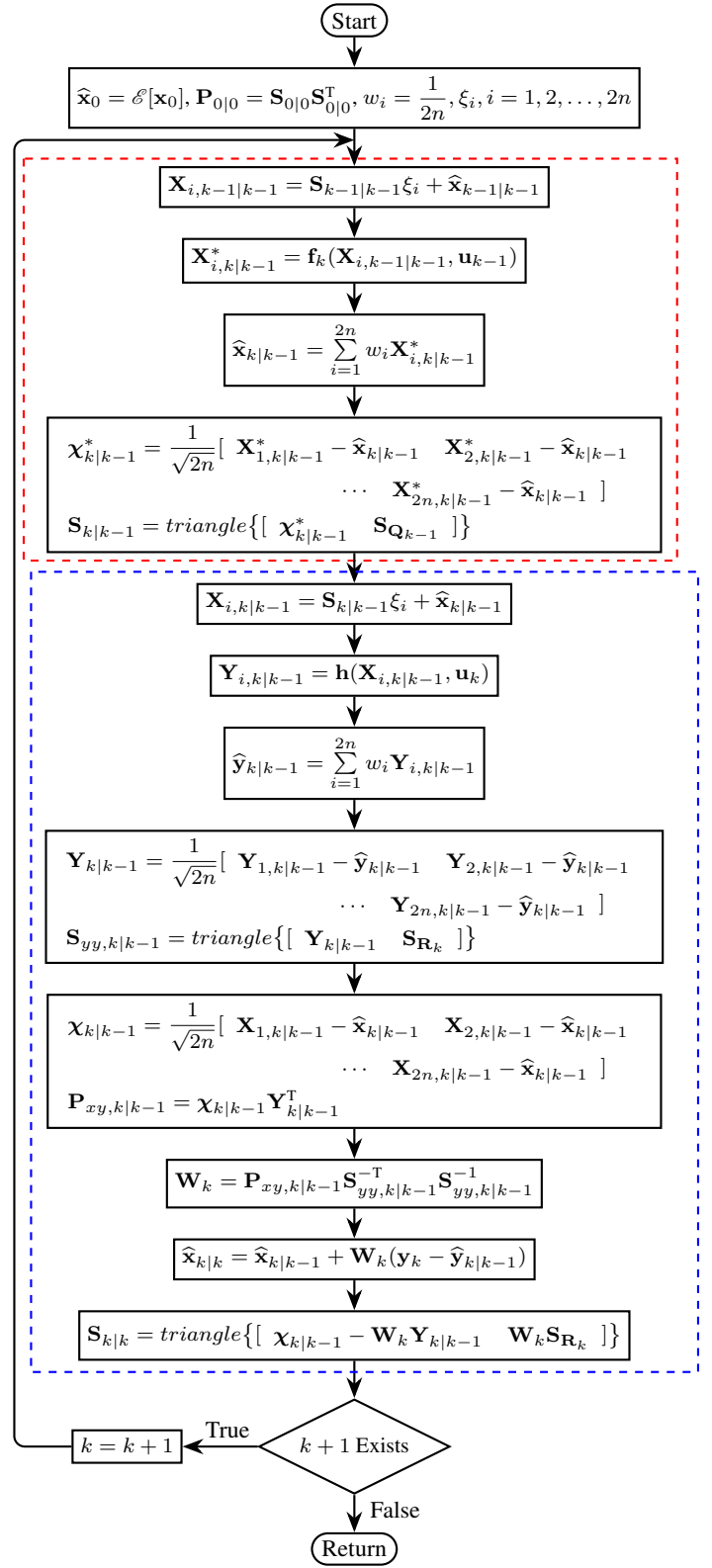
\begin{figure}[]

\begin{tikzpicture}

\node (start) [startstop] {Start};


\node (box1) [process,below=0.4 of start] {\footnotesize $\widehat{\mathbf{x}}_0=\mathscr{E}[\mathbf{x}_0]$, $\mathbf{P}_{0|0} = \mathbf{S}_{0|0}\mathbf{S}_{0|0}^\text{T}$, $w_i=\dfrac{1}{2n}, \xi_i, i=1,2,\ldots,2n$};

\node (box2) [process,below=0.5 of box1] {\footnotesize $\mathbf{X}_{i,k-1|k-1} = \mathbf{S}_{k-1|k-1} \xi_i + \widehat{\mathbf{x}}_{k-1|k-1}$};

\node (box3) [process,below=0.4 of box2] {\footnotesize $\mathbf{X}_{i,k|k-1}^* = \mathbf{f}_k(\mathbf{X}_{i,k-1|k-1},\mathbf{u}_{k-1})$};

\node (box4) [process,below=0.4 of box3] {\footnotesize $\widehat{\mathbf{x}}_{k|k-1} = \sum\limits_{i=1}^{2n} w_i \mathbf{X}_{i,k|k-1}^*$};

\node (box5) [process,below=0.4 of box4] {\footnotesize $\begin{array}{l} \boldsymbol{\chi}_{k|k-1}^* = 

	\dfrac{1}{\sqrt{2n}}[\begin{array}{cc} \mathbf{X}_{1,k|k-1}^* - \widehat{\mathbf{x}}_{k|k-1} & \mathbf{X}_{2,k|k-1}^* - \widehat{\mathbf{x}}_{k|k-1} \end{array}\\

	\begin{array}{cc}

	\qquad\qquad\qquad\qquad\qquad\qquad \cdots & \mathbf{X}_{2n,k|k-1}^* - \widehat{\mathbf{x}}_{k|k-1} \end{array}]\\

	\mathbf{S}_{k|k-1} = triangle\big\{[\begin{array}{cc} \boldsymbol{\chi}_{k|k-1}^* & \mathbf{S}_{\mathbf{Q}_{k-1}}\end{array} ]\big\}\end{array}$};

\draw[red,thick,dashed] ($(box2.north west)+(-1.8,0.1)$)  rectangle ($(box5.south east)+(0.2,-0.1)$);

\node (box6) [process,below=0.4 of box5] {\footnotesize $\mathbf{X}_{i,k|k-1} = \mathbf{S}_{k|k-1} \xi_ i + \widehat{\mathbf{x}}_{k|k-1}$};

\node (box7) [process,below=0.4 of box6] {\footnotesize $\mathbf{Y}_{i,k|k-1} = \mathbf{h}(\mathbf{X}_{i,k|k-1},\mathbf{u}_k)$};

\node (box8) [process,below=0.4 of box7] {\footnotesize $\widehat{\mathbf{y}}_{k|k-1} = \sum\limits_{i=1}^{2n} w_i \mathbf{Y}_{i,k|k-1}$};

\node (box9) [process,below=0.4 of box8] {\footnotesize $ \begin{array}{l}  \mathbf{Y}_{k|k-1} = 

	\dfrac{1}{\sqrt{2n}}[\begin{array}{cc} \mathbf{Y}_{1,k|k-1} - \widehat{\mathbf{y}}_{k|k-1} & \mathbf{Y}_{2,k|k-1} - \widehat{\mathbf{y}}_{k|k-1} \end{array}\\

	\begin{array}{cc}

	\qquad\qquad\qquad\qquad\qquad\qquad \cdots & \mathbf{Y}_{2n,k|k-1} - \widehat{\mathbf{y}}_{k|k-1} \end{array}] \\

	\mathbf{S}_{yy,k|k-1} = triangle\big\{[\begin{array}{cc}\mathbf{Y}_{k|k-1} & \mathbf{S}_{\mathbf{R}_{k}}\end{array} ]\big\}\end{array}$};

\node (box10) [process,below=0.4 of box9] {\footnotesize $\begin{array}{l}  \boldsymbol{\chi}_{k|k-1} = 

	\dfrac{1}{\sqrt{2n}}[\begin{array}{cc} \mathbf{X}_{1,k|k-1} - \widehat{\mathbf{x}}_{k|k-1} &	\mathbf{X}_{2,k|k-1} - \widehat{\mathbf{x}}_{k|k-1}  \end{array}\\ \begin{array}{cc}

	\qquad\qquad\qquad\qquad\qquad\qquad \cdots & \mathbf{X}_{2n,k|k-1} - \widehat{\mathbf{x}}_{k|k-1}\end{array}]\\

	\mathbf{P}_{xy,k|k-1} = \boldsymbol{\chi}_{k|k-1}\mathbf{Y}_{k|k-1}^\text{T}\end{array}$};	

\node (box11) [process,below=0.4 of box10] {\footnotesize $\mathbf{W}_k = \mathbf{P}_{xy,k|k-1} \mathbf{S}_{yy,k|k-1}^{-\text{T}} \mathbf{S}_{yy,k|k-1}^{-1}$};	

\node (box12) [process,below=0.4 of box11] {\footnotesize $\widehat{\mathbf{x}}_{k|k} = \widehat{\mathbf{x}}_{k|k-1} + \mathbf{W}_k (\mathbf{y}_k - \widehat{\mathbf{y}}_{k|k-1})$};	

\node (box13) [process,below=0.4 of box12] {\footnotesize $\mathbf{S}_{k|k} = triangle\big\{[\begin{array}{cc}\boldsymbol{\chi}_{k|k-1}-\mathbf{W}_k\mathbf{Y}_{k|k-1} & \mathbf{W}_k\mathbf{S}_{\mathbf{R}_k}\end{array} ]\big\}$};

\node (branch1) [decision,aspect=2,below=0.5 of box13] {\footnotesize $k+1$ Exists};

\node (box14) [process,left=1 of branch1] {\footnotesize $k=k+1$};

\node (return) [startstop,below=.5 of branch1] {Return};






\draw [arrow] (start) -- (box1);

\draw [arrow] (box1) -- coordinate[pos=.3](m1)(box2);

\draw [arrow] (box2) -- (box3);

\draw [arrow] (box3) -- (box4);

\draw [arrow] (box4) -- (box5);

\draw [arrow] (box5) -- (box6);

\draw [arrow] (box6) -- (box7);

\draw [arrow] (box7) -- (box8);

\draw [arrow] (box8) -- (box9);

\draw [arrow] (box9) -- (box10);

\draw [arrow] (box10) -- (box11);

\draw [arrow] (box11) -- (box12);

\draw [arrow] (box12) -- (box13);

\draw [arrow] (box13) -- (branch1);

\draw [arrow] (branch1) -- coordinate[pos=0.4](m3)(return);

\node [black,right=0.1 of m3] {False};

\draw [arrow] (branch1) -- node[pos=.4,above]{True}(box14);

\draw [arrow] (box14) -- ++(-1.5,0)coordinate[pos=0.4](m5) |- (m1);

\draw[blue,thick,dashed] ($(box6.north west)+(-2.2,0.15)$)  rectangle ($(box13.south east)+(0.9,-0.15)$);
\end{tikzpicture}
    \caption{SRCKF algorithm flowchart}
    \label{fig:SRCKF flowchart}
\end{figure}

\subsection{Multiple-model SRCKF for time-delay estimation}
Time-delay estimation introduces a challenge in the estimation framework since the variable delay cannot be transformed into an equivalent random walk process. Rational approximations of the delay may be used, such as Pad\'{e} approximation; however, the introduced error may be significant, especially for large and time-varying delays. Thus, in order to obtain a more accurate delay estimation, the previously introduced SRCKF algorithm is equipped with a multiple-model framework with a hypothesis testing \cite{hanlon2000multiple}.

The underlying idea of the multiple-model SRCKF (MMSRCKF) method is to use a bank of $N$ identical SRCKFs in a parallel setting, as shown in Fig. \ref{fig:Bank}. Every filter uses the same measurement and input data, but a different delay is assigned to each element. The $i$th element in the bank provides us with a state estimation $\mathbf{X}_k^i$ together with the residuals $\mathbf{r}_k^i = \mathbf{y}_k - \widehat{\mathbf{y}}_k^i$. Having this information, a hypothesis testing method can then be used to obtain information on the value of the delay. Specifically, if the delay matches the one assigned to the $i$th SRCKF element, then the corresponding residual is essentially a zero-mean white noise process, \textit{i.e.} $\mathscr{E}[\mathbf{r}_k^i] = 0$, and its covariance given by
	\begin{align}
		\mathscr{E}[\mathbf{r}_k^i (\mathbf{r}_k^i)^\text{T}] & = \mathbf{H}\mathbf{P}_k^i \mathbf{H} + \mathbf{R} \triangleq \mathbf{R}_k^i.
	\end{align}
\noindent where $\mathbf{H}=[1 \; 0 \; 0 \; 1]$, $\mathbf{P}_k^i$ denotes the estimation covariance at the $k$th step, and $\mathbf{R}$ denotes the measurement noise covariance. The conditional probability density function of the $i$th SRCKF element measurement can be computed through
	\beq
		f(\widehat{y}_k^i|y_k) = \dfrac{1}{(2\pi)^{\frac{m}{2}}\vert \mathbf{R}_k^i\vert^{\frac{1}{2}}}exp\Big\{-\dfrac{1}{2}(\mathbf{r}_k^i)^\text{T} (\mathbf{R}_k^i)^{-1} \mathbf{r}_k^i \Big\},
	\eeq
where $m$ is the dimension of available measurements at each time step. Then, the conditional probability of each hypothesis is
	\beq
		p_k^i = \dfrac{f(\widehat{y}_k^i|y_k)p_{k-1}^i}{\sum\limits_{j=1}^N f(\widehat{y}_k^j|y_k)p_{k-1}^j},
	\eeq
\noindent where $p_k^i$ can be interpreted as the normalized conditional probability of a case when the delay equals the assigned value to the $i$th filter, \textit{i.e.} $\sum\limits_{j=1}^N p_{k}^j=1$. Now, it is possible to estimate the delay according to the element which has the highest probability. However, to obtain a more accurate delay estimation and avoid large fluctuations, instead of choosing the most likely delay estimation, we use the probabilities as weights to blend the hypotheses resulting from a number of filters. In other words, the time delay can be estimated as
	\beq
		\hat{\tau}_k^{MM} = \sum\limits_{j=1}^N p_{k}^j \tau_k^j,
	\eeq	
\noindent where $\tau_k^j$ is the delay estimation of the $i$th filter. In the following section, the bank of $N$ parallel SRCKF estimators of the MMSRCKF (see Fig. \ref{fig:Bank}) will be implemented for the model parameter and the time delay estimation of the MAP response dynamics.

\begin{figure}[]
	\centering
		\begin{tikzpicture}[auto, font = {\sf \scriptsize}, cross/.style={path picture={\draw[black] (path picture bounding box.south east) -- (path picture bounding box.north west) (path picture bounding box.south west)-- (path picture bounding box.north east);}}, node distance=2cm,>=latex]


		\node at (-5,0)[input, name=input1] {};

		\node at (-4.5,0)[input, name=inputm] {};		

		\node at (-3,1)[tip, name=input2] {\Large $\mathbf{\vdots}$};

		\node at (-3,-1)[tip, name=input3] {\Large $\mathbf{\vdots}$};		

		\node at (-3,0)[block,text width=1.7cm,text height=.0em,text depth=0em] (CKFi) {\centering SRCKF with $\tau_i$};

 	    \node at (-3,2)[block,text width=1.73cm,text height=.0em,text depth=.0em] (CKF1) {\centering SRCKF with $\tau_1$};

 	    \node at (-3,-2)[block,text width=1.8cm,text height=.0em,text depth=.0em] (CKFN) {\centering SRCKF with $\tau_N$};

		\node at (0,0)[block,text width=2.1cm,text height=.7em,text depth=.2em] (HT) {\centering Hypothesis Testing};
	    \node at (2,0)[output, node distance=2.1cm] (output1) {};
	    \draw [-] (input1) -- node[name=in1tosum1, pos=0.5, above] {$\mathbf{u}_k$} node[name=ms2, pos=.5, below] {$\mathbf{y}_k$} (inputm);

		\draw [->] (inputm) -- node[name=sum1tosum2, pos=0.4] {} (CKF1.west);	 	    

	    \draw [->] (inputm) -- node[name=sum2toCont, pos=0.4] {} (CKFi);

	    \draw [->] (inputm) -- node[name=sum2toCont, pos=0.4] {} (CKFN.west);

	    \draw [->] (CKF1.east) -- node[name=sum2toCont, pos=0.5, right] {$\widehat{\mathbf{X}}_k^1$} (HT.170);

	    \draw [->] (CKFi) -- node[name=sum2toCont, pos=0.5, above] {$\widehat{\mathbf{X}}_k^i$} (HT);

	    \draw [->] (CKFN.east) -- node[name=sum2toCont, pos=0.5, right] {$\widehat{\mathbf{X}}_k^N$} (HT.190);

    		\draw [->] (HT) -- node[name=sum2toCont, pos=0.6, above] {$\widehat{\tau}_k^{MM}$} (output1);

    		\draw[red,thick,dashed] ($(CKF1.north west)+(-0.15,.15)$)  rectangle ($(CKFN.south east)+(0.15,-0.15)$);

	\end{tikzpicture}

    \caption{Bank of $N$ parallel SRCKFs for delay estimation}

    \label{fig:Bank}

\end{figure}
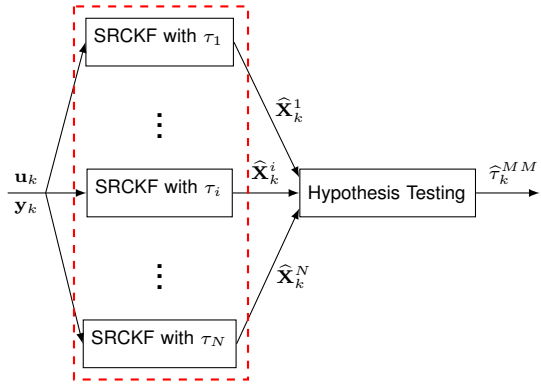

\section{MAP respose estimation}\label{sec:EstimationResults}

Experimental data from anesthetized swine acquired at the Resuscitation Research Laboratory, Department of Anesthesiology, UTMB in Galveston, Texas are utilized for the validation of MAP dynamic model parameter estimation using the proposed MMSRCKF method. An intramuscular injection of ketamine was used to sedate the swine which were maintained under anesthetic conditions by the continuous infusion of propofol. In order to monitor the blood pressure, a Philips MP2 transport device with a sampling frequency of $20$ Hz was used, while the PHP drug was infused through a bodyguard infusion pump. The $6$-hour experiment was performed on a swine of $55$ kg. Fig. \ref{fig:Bloodpressure} depicts the piece-wise constant PHP drug injection profile versus the corresponding absolute blood pressure response over time. To implement the estimation process, the experimental data has been re-sampled with a sampling frequency of $0.2$ Hz.  

\begin{figure}[!t] 
\hspace*{-.17in}
\centering \includegraphics[width=\columnwidth, height=2.4in]{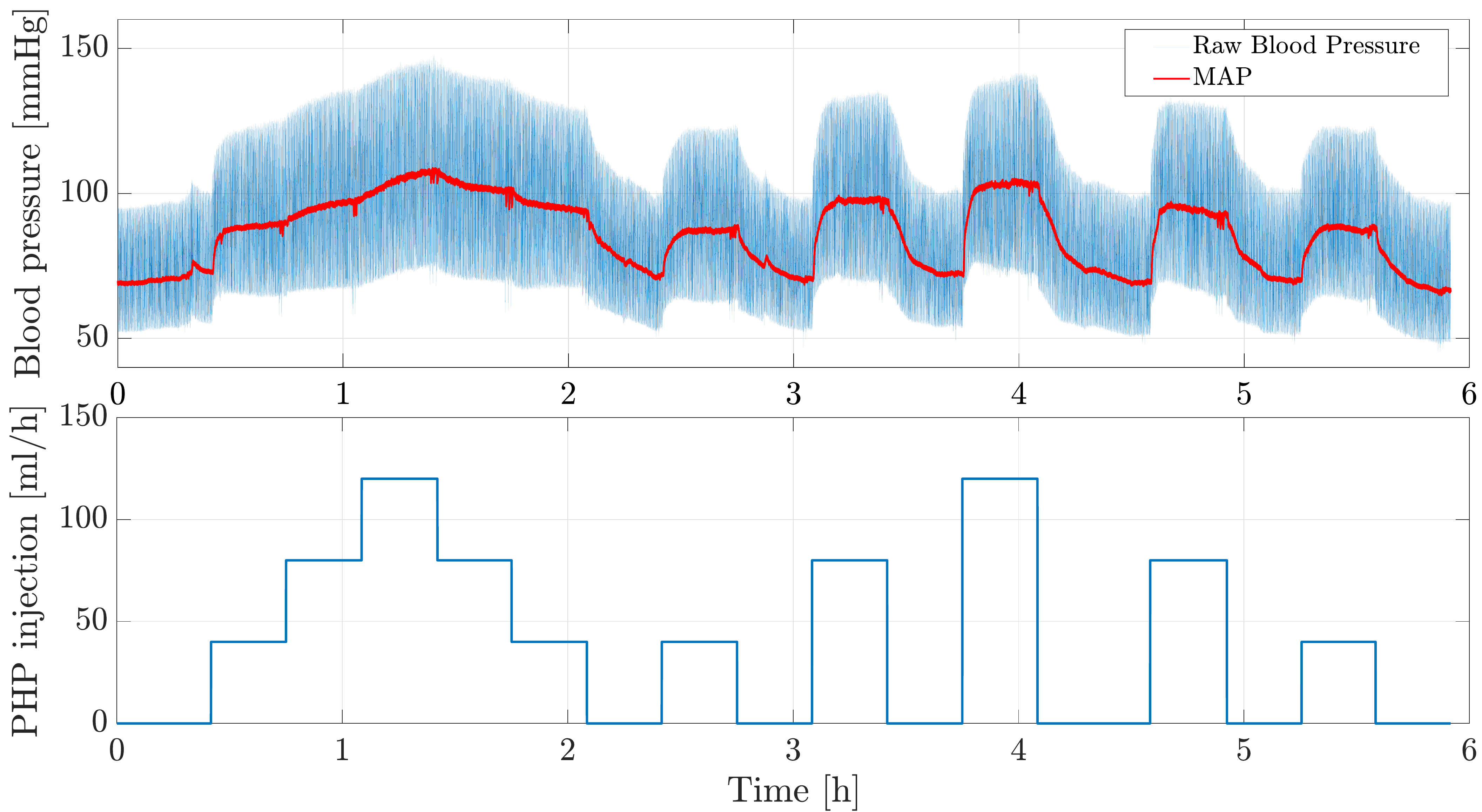} 
\caption{Experimental instantaneous blood pressure and MAP response to a piece-wise constant PHP drug infusion} 
\label{fig:Bloodpressure}
\end{figure}

\begin{figure}[!t] 
\hspace*{-.17in}
\centering \includegraphics[width=\columnwidth, height=2.1in]{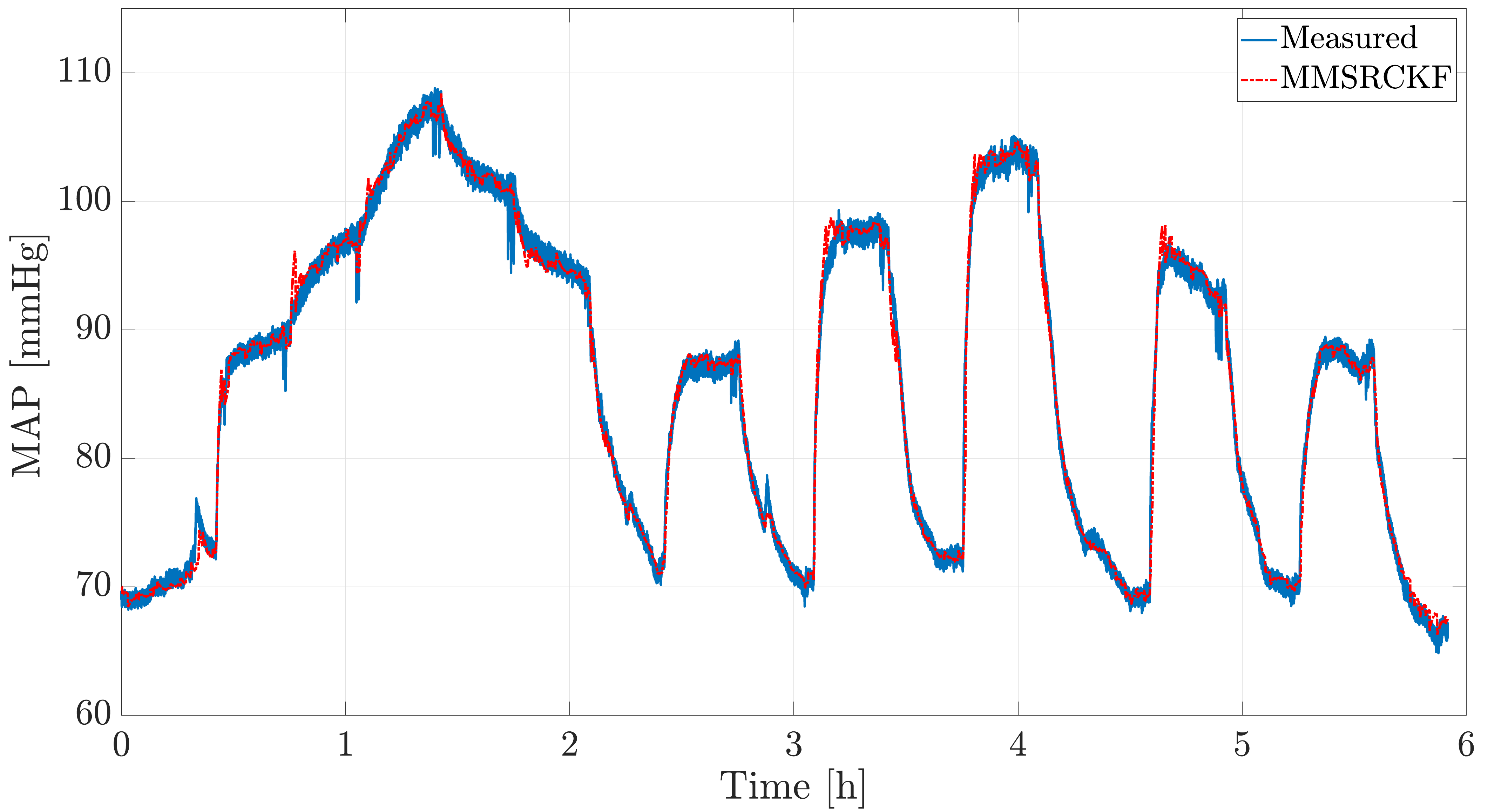} 
\caption{MAP estimation results} 
\label{fig:MAPestimation}
\end{figure}

To effectively capture the delay using the proposed MMSRCK algorithm and to address the trade-off between the delay estimation accuracy and the speed of convergence, a bank of $11$ SRCKFs with a delay interval of $\tau (t) \in [0\;100] s$ is considered. As a result, the time gridding for the evenly distributed filters is equal to $10 s$. The results of the implemented estimation approach on experimental data, as well as, the clinically acquired MAP measurements are shown in Fig. \ref{fig:MAPestimation}. As per the figure, the estimation method is capable of precisely capturing the MAP response of the patient to the injection of the vasoactive drug. Moreover, the estimation of the model parameters, namely the  sensitivity $K (t)$, time constant $T (t)$, MAP baseline value $MAP_b (t)$, and time delay $\tau (t)$, are shown in Figs. \ref{fig:Par_Sensitivity}, \ref{fig:Par_Lag}, \ref{fig:Par_Baseline}, and \ref{fig:Par_Delay}, respectively. The estimated parameter values follow the expected trends as discussed in detail in \cite{StasoujianRobust}. Moreover, the delay estimation shown in Fig. \ref{fig:Par_Delay} demonstrates a sharp initialization peak right after the initial injection of the drug and follows a slowly decaying trend during the rest of the experiment as expected \cite{Craig2004}.

\begin{figure}[!t] 
\hspace*{-.17in}
\centering \includegraphics[width=\columnwidth, height=2.15in]{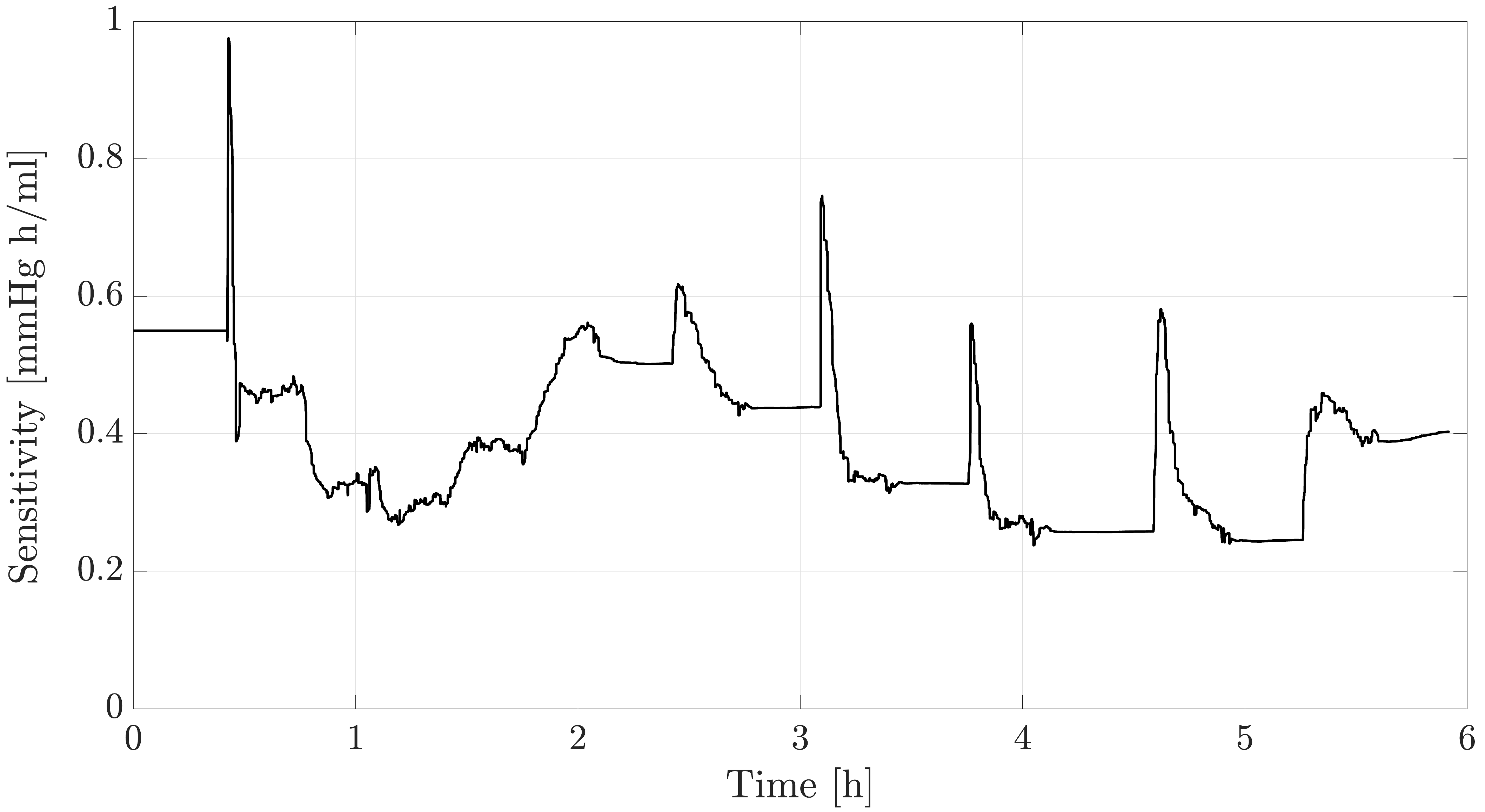} 
\caption{Sensitivity parameter estimation} 
\label{fig:Par_Sensitivity}
\end{figure}

\begin{figure}[!t] 
\hspace*{-.17in}
\centering \includegraphics[width=\columnwidth, height=2.15in]{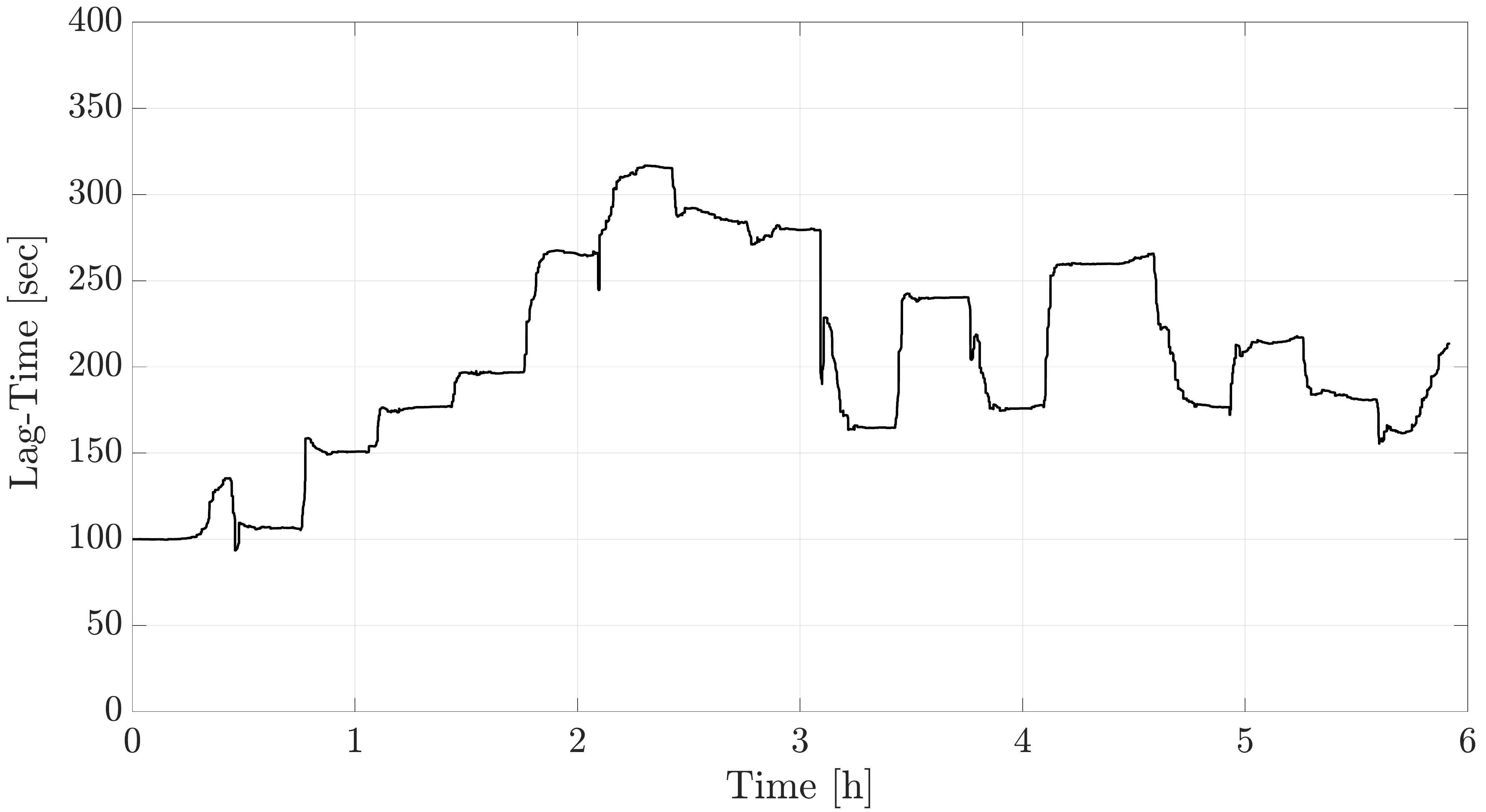} 
\caption{Lag-time parameter estimation} 
\label{fig:Par_Lag}
\end{figure}

\begin{figure}[!t] 
\hspace*{-.165in}
\centering \includegraphics[width=0.985\columnwidth, height=2.15in]{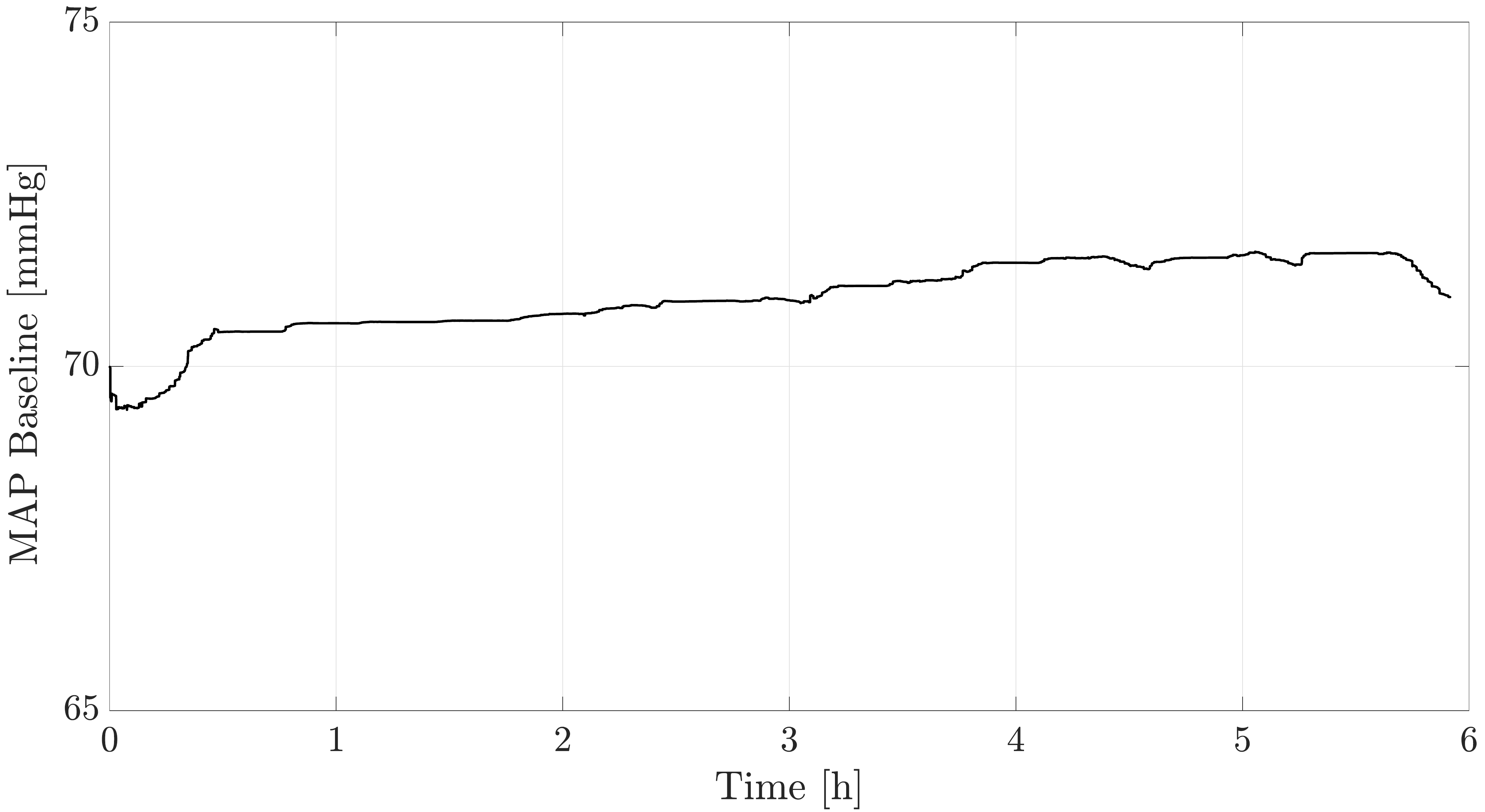} 
\caption{Baseline MAP parameter estimation} 
\label{fig:Par_Baseline}
\end{figure}

\section{MAP response LPV modeling and control}\label{sec:Control}
In order to apply the LPV control approach to the MAP regulation problem, we first represent the described system (\ref{eq:MAP response TF}) as an LPV time-delay model. Subsequently, a new time-delayed LPV formulation is developed to design a robust LPV time-delay gain-scheduling controller, where the real-time model parameters are continuously estimated via the MMSRCKF approach and utilized as scheduling parameters. The structure of the closed-loop system with the LPV controller and the real-time MMSRCKF estimator is shown in Fig. \ref{fig:system structure}.  



\subsection{MAP response continuous-time LPV modeling}

\begin{figure}[!t] 
\hspace*{-.17in}
\centering \includegraphics[width=\columnwidth, height=2.15in]{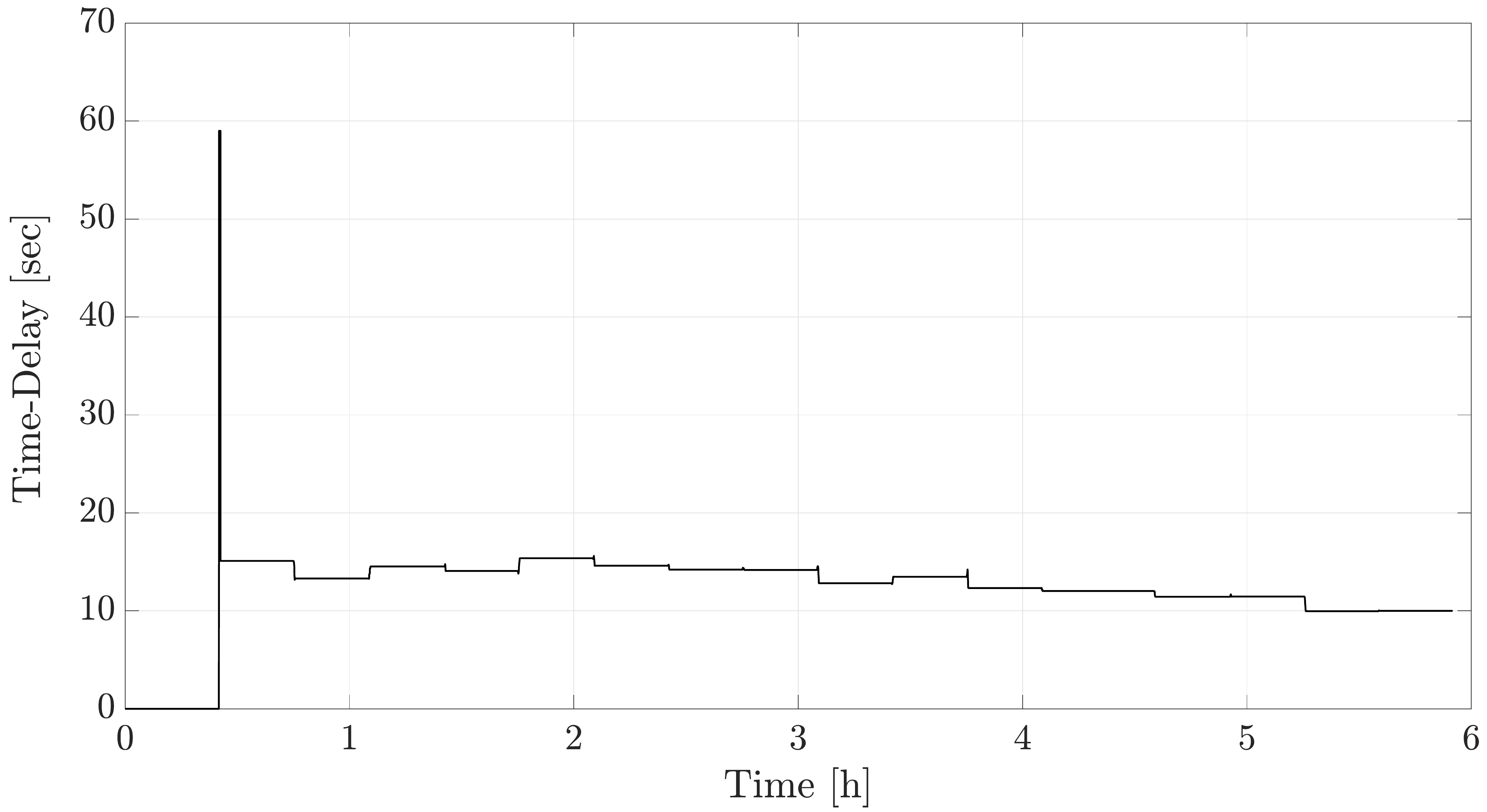} 
\caption{Time-delay parameter estimation} 
\label{fig:Par_Delay}
\end{figure}		

	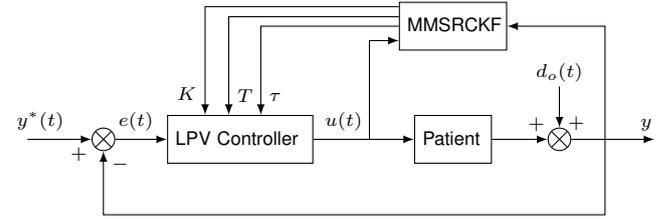
\begin{figure}[!t]
	\centering
		\begin{tikzpicture}[auto, font = {\sf \scriptsize}, cross/.style={path picture={\draw[black] (path picture bounding box.south east) -- (path picture bounding box.north west) (path picture bounding box.south west)-- (path picture bounding box.north east);}}, node distance=2cm,>=latex]
		\node at (-5,0)[input, name=input1] {};
		\node at (-2.2,0)[block,text width=1.7cm,text height=.5em,text depth=0em] (Controller) {{\centering LPV  Controller}};
 	    \node at (.6,0)[block,text width=.8cm,text height=.7em,text depth=.2em] (Patient) {\centering Patient};
 	    \node at (.6,1.5)[block,text width=1.2cm,text height=.5em,text depth=.1em] (ParamEstim) {\centering MMSRCKF};
		\node at (2,.7)[input, name=Disturb] {}; 	    
	    \node at (-4,0) [sum,cross] (sum1) {};
	    \node at (2,0) [sum,cross] (sum2) {};
	    
	    \node [output, left of=Patient, node distance=1.1cm] (output1) {};
	    \node [output, right of=Patient, node distance=1.1cm] (output2) {};
	    \node [output, right of=sum2, node distance=1.25cm] (output3) {};
	    \node [output, right of=sum2, node distance=.6cm] (output4) {};

	    \draw [->] (input1) -- node[name=in1tosum1, pos=0.2] {$y^*(t)$}node[name=ms2, pos=.8, below] {$+$} (sum1);
		\draw [->] (sum1) -- node[name=sum1tosum2, pos=.4] {$e(t)$} (Controller);	 	    
	    \draw [->] (Controller) -- node[name=sum2toCont, pos=0.3] {$u(t)$} (Patient);
	    \draw [->] (Patient) -- node[name=sum2toCont, pos=0.8] {$+$} (sum2); 
	    \draw [->] (sum2) -- node[name=sum2toCont, pos=0.9, above] {$y$} (output3); 
	    \draw [->] (output4) -- +(0,-1) -| node[name=sum2toCont, pos=0.9, right] {$-$} (sum1);
	    \draw [->] (output4) -- +(0,1.5) -- node[name=sum2toCont, pos=0.9, right] {} (ParamEstim.360);
	    
	    \draw [->] (output1) |- node[name=sum2toCont, pos=0.9] {} (ParamEstim.195);

	    \draw [->] (ParamEstim.160) -| node[name=sum5tosum1, pos=0.9, left] {$K$} (Controller.145); 
	    \draw [->] (ParamEstim.170) -| node[name=sum5tosum1, pos=0.9] {$T$} (Controller.115); 
	    \draw [->] (ParamEstim.180) -| node[name=sum5tosum1, pos=0.9] {$\tau$} (Controller.50); 	    
	    \draw [->] (Disturb) -- node[pos=.1, above]{$d_o(t)$} node[name=sum5tosum1, pos=0.9] {$+$} (sum2); 	    
	\end{tikzpicture}
    \caption{Closed-loop system structure}
    \label{fig:system structure}
\end{figure}

By considering the state variable as $x(t) = \Delta MAP (t)$, we can rewrite the state space representation of the first-order time-delayed MAP response model (\ref{eq:MAP response TF}) as follows
\begin{equation}
\begin{matrix}
 \dot{x}(t) & = & -\dfrac{1}{T(t)} x(t) + \dfrac{K(t)}{T(t)} u(t-\tau(t)),\\[0.4cm]
 y(t)& = & x(t) + d_o(t),\:\:\:\:\:\:\:\:\:\:\:\:\:\:\:\:\:\:\:\:\:\:\:\:\:\:\:\:\:\:\:\:\:
\end{matrix}
\label{eq:MAP state space}
\end{equation}
where $y(t)$ is the patient's measured MAP response and $d_o(t)$ denotes output disturbances. In (\ref{eq:MAP state space}), the varying time delay, $\tau (t)$, is appearing in the input signal. In order to utilize the proposed time-delay LPV system control design framework, we need to transform the input delay system into a state-delay LPV representation. To this end, we introduce a filtered input signal $u_a(t)$ as follows
	\beq
	u(s)=\frac{\Omega}{s + \Lambda} u_a (s),
	\eeq
	where $\Omega$ and $\Lambda$ are positive scalars that are selected based on the bandwidth of the actuators. By considering the augmented state vector  $\mathbf{x}_a (t) = [\begin{array}{ccc}
	x(t) & u(t) & x_e(t)\end{array}]^{\text{T}}$, and defining the scheduling parameter vector, $\boldsymbol{\rho}(t) = [\begin{array}{ccc}
	K(t) & T(t) & \tau(t) \end{array}]^{\text{T}}$, the LPV state-delayed state-space representation of the MAP response dynamics takes the following form
\begin{equation}
\!\!\!\!\begin{array}{rl}
 \dot{\mathbf{x}}_a(t)& =  \mathbf{A}(\boldsymbol{\rho}(t)) \mathbf{x}_a(t)+\mathbf{A}_d(\boldsymbol{\rho}(t)) \mathbf{x}_a(t-\tau(t))\\ 
& + \mathbf{B}_1(\boldsymbol{\rho}(t))\mathbf{w}(t)  + \mathbf{B}_2(\boldsymbol{\rho}(t))u(t) \\[0.20cm] 
 y_a(t) & = \mathbf{C}_2(\boldsymbol{\rho}(t)) \mathbf{x}_a(t)+ \mathbf{C}_{2d}(\boldsymbol{\rho}(t)) \mathbf{x}_a(t-\tau(t))\\
 & + \mathbf{D}_{21}(\boldsymbol{\rho}(t)) \mathbf{w}(t),
\end{array}
\label{LPV_MAP}
\end{equation}
where the exogenous disturbance vector $\mathbf{w}(t) = [\begin{array}{cc}
 																												r(t) & d_o(t)\end{array} ]^{\text{T}}$ includes the reference command and output disturbance. The third state $x_e (t)$ is defined for command tracking purposes, \textit{i.e.} $\dot{x}_e (t) = e(t) = r(t)-y(t) = r(t) - (x(t) + d_o(t))$. Thus, the state space matrices of the augmented LPV system (\ref{LPV_MAP}) are obtained as
\begin{align}
			& \mathbf{A}(\boldsymbol{\rho}(t)) =\begin{bmatrix}
- \frac{1}{T(t)} & 0 & 0\\ 
0 & -\Lambda & 0 \\ 
-1 & 0 & 0
\end{bmatrix}, \: \mathbf{A}_d(\boldsymbol{\rho}(t))=\begin{bmatrix}
0 & \frac{K(t)}{T(t)} & 0\\ 
0 & 0 & 0 \\ 
0 & 0 & 0
\end{bmatrix},\nonumber\\
		& \mathbf{B}_1(\boldsymbol{\rho}(t)) =\begin{bmatrix}
0 & 0\\ 
0 & 0\\ 
1 & -1
\end{bmatrix}, \mathbf{B}_2(\boldsymbol{\rho}(t))=\begin{bmatrix}
0\\ 
\Omega\\ 
0
\end{bmatrix},\nonumber\\
			& \mathbf{C}_2(\boldsymbol{\rho}(t)) = \begin{bmatrix}
1 & 0 & 0 \end{bmatrix}, \: \mathbf{D}_{21}(\boldsymbol{\rho}(t))=\begin{bmatrix}
0 & 1
\end{bmatrix},
\label{eq:matrices1}
\end{align}
and $\mathbf{C}_{2d}(\boldsymbol{\rho}(t))$ is a zero matrix with compatible dimensions.

The robust time-delayed LPV control synthesis is examined as next.

\subsection{Robust time-delay LPV control design}\label{sec:control design}
Consider the following state-space representation of an LPV system with a varying state delay
\begin{equation}
\begin{array}{cl}
 \dot{\mathbf{x}}(t)  & =  \mathbf{A}(\boldsymbol{\rho}(t)) \mathbf{x}(t)+\mathbf{A}_d(\boldsymbol{\rho}(t)) \mathbf{x}\big(t-\tau (\boldsymbol{\rho}(t))\big) \\[0.1cm] 
 & + \mathbf{B}_1(\boldsymbol{\rho}(t))\mathbf{w}(t) + \mathbf{B}_2(\boldsymbol{\rho}(t))\mathbf{u}(t) \\[0.25cm] 
  \mathbf{z}(t) & = \mathbf{C}_1(\boldsymbol{\rho}(t)) \mathbf{x}(t) + \mathbf{C}_{1d}(\boldsymbol{\rho}(t)) \mathbf{x}\big(t-\tau (\boldsymbol{\rho}(t))\big)\\[0.1cm] 
  & + \mathbf{D}_{11}(\boldsymbol{\rho}(t)) \mathbf{w}(t)+ \mathbf{D}_{12}(\boldsymbol{\rho}(t)) \mathbf{u}(t)\\[0.25cm]
 \mathbf{y}(t) & = \mathbf{C}_2(\boldsymbol{\rho}(t)) \mathbf{x}(t)+ \mathbf{C}_{2d}(\boldsymbol{\rho}(t)) \mathbf{x}\big(t-\tau (\boldsymbol{\rho}(t))\big)\\[0.1cm]
 & + \mathbf{D}_{21}(\boldsymbol{\rho}(t)) \mathbf{w}(t),\\[0.25cm]
 \mathbf{x}(t_0 + s)  & = \boldsymbol{\phi}(s), \:\:\:\: \forall s \in [-\overline{\tau}, \: \: 0],
\end{array}
\label{LPVsystem}
\end{equation}
where $\mathbf{x}(t) \in \mathbb{R}^n$ is the system state vector, $\mathbf{w}(t) \in \mathbb{R}^{n_w}$ is the vector of exogenous disturbances with finite energy in the space $\mathcal{L}_2[0, \:\: \infty]$, $\mathbf{u}(t) \in \mathbb{R}^{n_u}$ is the input vector, $\mathbf{z}(t) \in \mathbb{R}^{n_z}$ is the vector of outputs to be controlled, $\mathbf{y}(t) \in \mathbb{R}^{n_y}$ is the vector of measurable outputs, $\boldsymbol{\phi}(s) \in \mathcal{C}([-\overline{\tau} \:\: 0], \mathbb{R}^n)$ is the system initial condition, and the state space matrices in (\ref{LPVsystem}), \textit{i.e.} $\mathbf{A}(\cdot)$, $\mathbf{A}_d(\cdot)$, $\mathbf{B}_1(\cdot)$, $\mathbf{B}_2(\cdot)$, $\mathbf{C}_1(\cdot)$, $\mathbf{C}_{1d}(\cdot)$, $\mathbf{D}_{11}(\cdot)$, $\mathbf{D}_{12}(\cdot)$,  $\mathbf{C}_2(\cdot)$, $\mathbf{C}_{2d}(\cdot)$, and  $\mathbf{D}_{21}(\cdot)$ are real-valued matrices which are  continuous functions of the time-varying parameter vector $\boldsymbol{\rho}(\cdot) \in \mathscr{F}^\nu _\mathscr{P}$. The scheduling parameter vector is assumed to be measurable in real-time and the set $\mathscr{F}^\nu _\mathscr{P}$ denotes the set of allowable scheduling parameter trajectories defined as
\begin{align}
 \mathscr{F}^\nu _\mathscr{P} \triangleq \{\boldsymbol{\rho}(t) \in \mathcal{C}(\mathbb{R}_{+},\mathbb{R}^{s}):\boldsymbol{\rho}(t) \in \mathscr{P}, |\dot{\rho}_i (t)| \leq \nu_i,\nonumber\\  i=1,2,\dots,n_s, \: \forall t \in \mathbb{R}_{\geq 0} \},
 \label{eq:parametertraj}
\end{align}
where $n_s$ is the number of parameters and  $\mathscr{P}$ is a compact subset of $\mathbb{R}^{n_s}$. Also, $\tau (\boldsymbol{\rho}(t))$ is a differentiable scalar function representing the parameter-varying time delay, that is considered to be dependent on the scheduling parameter vector and lies in the set $\mathscr{T}^\mu$ defined as  
\begin{align}
\mathscr{\mathscr{T}^\mu} \triangleq \{ \tau (\boldsymbol{\rho}(t)) \in \mathcal{C}(\mathscr{P},\mathbb{R}_{\geq 0}) : 0 \leq \tau (\cdot) \leq \overline{\tau} < \infty, \nonumber\\ \dot{\tau}(\cdot) \leq \mu, \: \forall t \in \mathbb{R}_{\geq 0}\}.
\end{align}
 Since the delay is considered to be dependent on the scheduling parameter vector $\boldsymbol{\rho}(t)$, as a result, the delay bound should be incorporated into the parameter set $\mathscr{F}^\nu _\mathscr{P}$. 

In the present work, Lyapunov-Krasovskii functionals are utilized to obtain less conservative results, which are valid for bounded parameter variation rates \cite{apkarian1998advanced}. We seek a gain-scheduling LPV controller to meet the following objectives:
	\begin{itemize}
		\item Input-to-state stability (ISS) of the closed-loop system in the presence of parameter and delay variations, uncertainties, and disturbances, and
		\item Minimization of the worst case amplification of the induced $\mathcal{L}_2$-norm of the mapping from the disturbances $\mathbf{w}(t)$ to the controlled output $\mathbf{z}(t)$, given by
			\beq
				\Vert \mathbf{T}_{\mathbf{z}\mathbf{w}}\Vert_{i,2} = \underset{\boldsymbol{\rho}(t) \in \mathscr{F}^\nu _\mathscr{P}}{\sup} \:\:\: \underset{\Vert \mathbf{w}(t) \Vert_2 \neq 0}{\sup}\:\: \frac{\Vert \mathbf{z}(t) \Vert_2}{\Vert \mathbf{w}(t) \Vert_2}.
				\label{eq:Performance Index}
			\eeq
	\end{itemize}

\noindent Accordingly, a full-order dynamic output-feedback controller in the following form is considered:
\begin{equation}
\begin{array}{cl}
\dot{\mathbf{x}}_k(t) & = \mathbf{A}_k (\boldsymbol{\rho}) \mathbf{x}_k(t)+ \mathbf{A}_{dk}(\rho)\mathbf{x}_k(t-\tau(t))+\mathbf{B}_k(\boldsymbol{\rho})\mathbf{y}(t),\\ 
\mathbf{u}(t) & = \mathbf{C}_k (\boldsymbol{\rho}) \mathbf{x}_k(t)+ \mathbf{C}_{dk}(\boldsymbol{\rho})\mathbf{x}_k(t-\tau(t))+\mathbf{D}_k(\boldsymbol{\rho})\mathbf{y}(t),
\end{array}
\label{controller}
\end{equation}
where $\mathbf{x}_k (t) \in \mathbb{R}^{n}$ is the controller state vector and $\mathbf{x}_k(t-\tau(t)) \in \mathbb{R}^{n}$ is the delayed state of the controller. Considering the system dynamics (\ref{LPVsystem}) and the controller (\ref{controller}), the closed-loop system would be as follows:
\beq
\begin{array}{cl}
\dot{\mathbf{x}}_{cl}(t)\!\!\!\!  & =  \mathbf{A}_{cl}\: {\mathbf{x}}_{cl}(t) + \mathbf{A}_{d,cl}\: {\mathbf{x}}_{cl}(t-\tau(t)) + \mathbf{B}_{cl}\:\mathbf{w}(t),\\ 
\mathbf{z}(t)\!\!\!\! & =  \mathbf{C}_{cl}\:{\mathbf{x}}_{cl}(t) +  \mathbf{C}_{d,cl}\:{\mathbf{x}}_{cl}(t-\tau(t)) + \mathbf{D}_{cl}\: \mathbf{w}(t),
\end{array}
\label{eq:closed-loop system}
\eeq
where
\begin{align*}
		& \mathbf{A}_{cl}=\begin{bmatrix}
\mathbf{A} + \mathbf{B}_2 \mathbf{D}_k \mathbf{C}_2 & \mathbf{B}_2 \mathbf{C}_k\\ 
\mathbf{B}_k \mathbf{C}_2 & \mathbf{A}_k 
\end{bmatrix}, \\
& \mathbf{A}_{d,cl}\!=\!\begin{bmatrix}
\mathbf{A}_d + \mathbf{B}_2 \mathbf{D}_k \mathbf{C}_{2d} & \mathbf{B}_2 \mathbf{C}_{dk}\\ 
\mathbf{B}_k \mathbf{C}_{2d} & \mathbf{A}_{dk} 
\end{bmatrix},  \mathbf{B}_{cl}=\begin{bmatrix}
\mathbf{B}_1 + \mathbf{B}_2 \mathbf{D}_k \mathbf{D}_{21} \\
\mathbf{B}_k \mathbf{D}_{21} 
\end{bmatrix}, \\
& \mathbf{C}_{cl}=\begin{bmatrix}
\mathbf{C}_1 + \mathbf{D}_{12} \mathbf{D}_k \mathbf{C}_2 & \mathbf{D}_{12} \mathbf{C}_k
\end{bmatrix}, \\
& \mathbf{C}_{d,cl} =\begin{bmatrix}
\mathbf{C}_{1d} + \mathbf{D}_{12} \mathbf{D}_k \mathbf{C}_{2d} & \mathbf{D}_{12} \mathbf{C}_{dk}
\end{bmatrix}, \\
& \mathbf{D}_{cl}= \mathbf{D}_{11} + \mathbf{D}_{12} \mathbf{D}_k \mathbf{D}_{21}, 
\end{align*}
\noindent and $\mathbf{x}_{cl}(t) = [\begin{array}{cc}
\mathbf{x}(t) &\mathbf{x}_k (t)\end{array} ]^{\text{T}}$, and the dependence on the scheduling parameter has been dropped for clarity. Now, considering the closed-loop system (\ref{eq:closed-loop system}), the following result provides sufficient conditions for the synthesis of a delayed output-feedback controller which guarantees closed-loop asymptotic stability and a specified level of disturbance rejection performance as defined in (\ref{eq:Performance Index}).

\begin{theorem}\label{thm:thm3} The system (\ref{LPVsystem}) is asymptotically stable for parameters $\boldsymbol{\rho}(t) \in \mathscr{F}^\nu _\mathscr{P}$ and all delays $\tau (t) \in \mathscr{T}^\mu$ and satisfy the condition $||\mathbf{z}(t)||_2 \leq \gamma ||\mathbf{w}(t)||_2$ for the closed-loop system (\ref{eq:closed-loop system}), if there exists a continuously differentiable matrix function $\widetilde{\mathbf{P}} : \mathbb{R}^{s}\rightarrow\mathbb{S}^{2n}_{++}$, parameter dependent matrix functions $\mathbf{X}, \mathbf{Y}  : \mathbb{R}^{s}\rightarrow\mathbb{S}^{n}_{++}$, constant matrices $\widetilde{\mathbf{Q}}$, $\widetilde{\mathbf{R}} \in \mathbb{S}^{n}_{++}$, parameter dependent matrices $\widehat{A}$, $\widehat{A}_d$, $\widehat{B}$, $\widehat{C}$, $\widehat{C}_d$, $\widehat{D}_k$, and scalars $\mathbf{\gamma} > 0$, and $\lambda_2$, $\lambda_3$  such that the following LMI conditions hold
\begin{equation}
\begin{array}{l}
\left[\begin{array}{ccc}
-2\widetilde{\mathbf{V}} & \widetilde{\mathbf{P}}-\lambda_2\widetilde{\mathbf{V}} + \mathscr{A} & -\lambda_3\widetilde{\mathbf{V}} + \mathscr{A}_d\\
\star & \widetilde{\mathbf{\Psi}}_{22} + \lambda_2(\mathscr{A}+\mathscr{A}^{\text{T}}) &  \widetilde{\mathbf{R}} + \lambda_3 \mathscr{A}^{\text{T}} + \lambda_2 \mathscr{A}_d \\
\star & \star & \widetilde{\mathbf{\Xi}}_{22} + \lambda_3(\mathscr{A}_d+\mathscr{A}_d^{\text{T}}) \\
\star & \star & \star \\
\star & \star & \star \\
\star & \star & \star  
\end{array}\right.\\
\qquad\qquad\quad\quad\quad\qquad\left.\begin{array}{ccc}
\mathscr{B} & \mathbf{0} & \widetilde{\mathbf{V}}+\overline{\tau}\widetilde{\mathbf{R}} \\
\lambda_2 \mathscr{B}  &  \mathscr{C}^{\text{T}} & \lambda_2\widetilde{\mathbf{V}} - \widetilde{\mathbf{P}}\\
\lambda_3 \mathscr{B} & \mathscr{C}_d^{\text{T}}  & \lambda_3\widetilde{\mathbf{V}}\\ 
-\gamma \mathbf{I} & \mathscr{D}^{\text{T}} & \mathbf{0}\\
\star & -\gamma \mathbf{I} & \mathbf{0}\\
\star & \star & (-1 -2\overline{\tau})\widetilde{\mathbf{R}} 
\end{array}\right]\prec\mathbf{0},
\end{array}
\label{eq:LMI closed-loop}
\end{equation}
with
\begin{equation}
\begin{array}{lll}
\widetilde{\mathbf{V}}  & = & \begin{bmatrix}
\mathbf{Y} & \mathbf{I}\\ 
\mathbf{I} & \mathbf{X}
\end{bmatrix},\\[0.3cm]
\mathscr{A} & = &\begin{bmatrix}
\mathbf{A} \mathbf{Y}+\mathbf{B}_2\widehat{C} & \mathbf{A}+\mathbf{B}_2 \mathbf{D}_k \mathbf{C}_2\\ 
\widehat{A} & \mathbf{X}\mathbf{A}+\widehat{B}\mathbf{C}_2
\end{bmatrix} ,\\[0.4cm]
\mathscr{A}_d & = & \begin{bmatrix}
\mathbf{A}_d\mathbf{Y}+\mathbf{B}_2\widehat{C}_d & \mathbf{A}_d+\mathbf{B}_2 \mathbf{D}_k \mathbf{C}_{2d}\\ 
\widehat{A}_d & \mathbf{X}\mathbf{A}_d+\widehat{B}\mathbf{C}_{2d}
\end{bmatrix},\\[0.4cm]
\mathscr{B} & = & \begin{bmatrix}
\mathbf{B}_1 + \mathbf{B}_2 \mathbf{D}_k \mathbf{D}_{21}\\ 
\mathbf{X}\mathbf{B}_1 + \widehat{B} \mathbf{D}_{21}
\end{bmatrix},\\[0.3cm] 
\mathscr{C} & = & \begin{bmatrix}
\mathbf{C}_1 \mathbf{Y}+ \mathbf{D}_{12} \widehat{C} & \mathbf{C}_1 + \mathbf{D}_{12} \mathbf{D}_k \mathbf{C}_2 
\end{bmatrix},\\[0.3cm]
\mathscr{C}_d & = & \begin{bmatrix}
\mathbf{C}_{1d} \mathbf{Y}+ \mathbf{D}_{12} \widehat{C}_d & \mathbf{C}_{1d} + \mathbf{D}_{12} \mathbf{D}_k \mathbf{C}_{2d} 
\end{bmatrix},\\[0.3cm]
\mathscr{D} & = & \begin{bmatrix}
\mathbf{D}_{11} + \mathbf{D}_{12} \mathbf{D}_k \mathbf{D}_{21}
\end{bmatrix},\\[0.3cm] \widetilde{\mathbf{\Psi}}_{22} & = & \bigg[ \sum_{i=1}^s \pm \Big(\nu_i \frac{\partial \widetilde{\mathbf{P}}(\boldsymbol{\rho})}{\partial \rho_i}\Big) \bigg] + \widetilde{\mathbf{Q}}-\widetilde{\mathbf{R}}  ,\\[0.3cm]
\widetilde{\mathbf{\Xi}}_{22} & = &-\bigg[1- \sum_{i=1}^s \pm \Big(\nu_i \frac{\partial \tau}{\partial \rho_i}\Big) \bigg] \widetilde{\mathbf{Q}}-\widetilde{\mathbf{R}} .
\end{array}
\label{Eqcoef1}
\end{equation}
\end{theorem}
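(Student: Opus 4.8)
The plan is to proceed in two stages: first derive a parameter-dependent analysis certificate guaranteeing asymptotic stability and the induced-$\mathcal{L}_2$ bound $\|\mathbf z\|_2\le\gamma\|\mathbf w\|_2$ for the closed-loop system (\ref{eq:closed-loop system}), and then convexify that certificate into the synthesis LMI (\ref{eq:LMI closed-loop}) by a congruence transformation together with a linearizing change of controller variables. For the analysis stage I would take the delay-dependent, parameter-dependent Lyapunov--Krasovskii functional
\benn
V = \mathbf{x}_{cl}^{\text T}\widetilde{\mathbf{P}}(\boldsymbol{\rho})\mathbf{x}_{cl} + \int_{t-\tau(t)}^{t} \mathbf{x}_{cl}^{\text T}(s)\mathbf{Q}\,\mathbf{x}_{cl}(s)\,\mathrm{d}s + \int_{-\overline{\tau}}^{0}\!\!\int_{t+\theta}^{t} \dot{\mathbf{x}}_{cl}^{\text T}(s)\mathbf{R}\,\dot{\mathbf{x}}_{cl}(s)\,\mathrm{d}s\,\mathrm{d}\theta,
\eenn
with $\widetilde{\mathbf P}(\boldsymbol\rho)\succ\mathbf 0$ and $\mathbf Q,\mathbf R\succ\mathbf 0$; the double integral is precisely the ingredient that yields a \emph{delay-dependent} (hence less conservative) condition, consistent with the stated goals in Section~\ref{sec:control design}.

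Differentiating $V$ along (\ref{eq:closed-loop system}) and imposing the bounded-real dissipation inequality $\dot V+\tfrac{1}{\gamma}\mathbf z^{\text T}\mathbf z-\gamma\,\mathbf w^{\text T}\mathbf w<0$ produces the $\mathcal L_2$ certificate whose Schur-complemented form exhibits the two $-\gamma\mathbf I$ performance blocks of (\ref{eq:LMI closed-loop}). The single integral contributes $\mathbf x_{cl}^{\text T}\mathbf Q\mathbf x_{cl}-(1-\dot\tau)\mathbf x_{cl}^{\text T}(t-\tau)\mathbf Q\mathbf x_{cl}(t-\tau)$, where $\dot\tau\le\mu$ is invoked to give $\widetilde{\mathbf\Xi}_{22}$; the double integral gives $\overline\tau\,\dot{\mathbf x}_{cl}^{\text T}\mathbf R\dot{\mathbf x}_{cl}-\int_{t-\overline\tau}^{t}\dot{\mathbf x}_{cl}^{\text T}\mathbf R\dot{\mathbf x}_{cl}$, and the remaining integral is lower bounded (Jensen / reciprocally convex inequality) by a quadratic in $\mathbf x_{cl}(t)-\mathbf x_{cl}(t-\tau)$, producing the $\widetilde{\mathbf R}$-weighted entries and the extra coordinate tied to the last block row/column. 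The term $\dot{\widetilde{\mathbf P}}=\sum_i\dot\rho_i\,\partial\widetilde{\mathbf P}/\partial\rho_i$ is affine in the rates and, since $|\dot\rho_i|\le\nu_i$, its worst case sits at the vertices $\dot\rho_i=\pm\nu_i$, which is exactly the $\sum_i\pm(\nu_i\,\partial\widetilde{\mathbf P}/\partial\rho_i)$ appearing in $\widetilde{\mathbf\Psi}_{22}$ (and likewise $\partial\tau/\partial\rho_i$ in $\widetilde{\mathbf\Xi}_{22}$). To keep the inequality affine in the closed-loop data I would treat $\dot{\mathbf x}_{cl}$ as a free variable and adjoin the dynamics $0=\dot{\mathbf x}_{cl}-\mathbf A_{cl}\mathbf x_{cl}-\mathbf A_{d,cl}\mathbf x_{cl}(t-\tau)-\mathbf B_{cl}\mathbf w$ through a structured slack matrix weighted by $[\,\mathbf I\;\;\lambda_2\mathbf I\;\;\lambda_3\mathbf I\,]^{\text T}$ (the descriptor/Finsler device); the two scalars $\lambda_2,\lambda_3$ are exactly these multiplier weights and they generate the $\mathrm{He}$-type entries $\lambda_2(\mathscr A+\mathscr A^{\text T})$, $\lambda_3(\mathscr A_d+\mathscr A_d^{\text T})$ and the mixed block $\widetilde{\mathbf R}+\lambda_3\mathscr A^{\text T}+\lambda_2\mathscr A_d$, together with the $-2\widetilde{\mathbf V}$ in the $(1,1)$ position.

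The decisive and most delicate step is the linearization of the resulting inequality, which is bilinear in the Lyapunov variables and the unknown controller matrices $\mathbf A_k,\mathbf A_{dk},\mathbf B_k,\mathbf C_k,\mathbf C_{dk},\mathbf D_k$ of (\ref{controller}). Partitioning the Lyapunov matrix and its inverse in the usual blocks, introducing $\mathbf Y,\mathbf X\succ\mathbf 0$ and the outer factor $\Pi_1=\begin{bmatrix}\mathbf Y & \mathbf I\\ \mathbf N^{\text T} & \mathbf 0\end{bmatrix}$ with $\mathbf M\mathbf N^{\text T}=\mathbf I-\mathbf X\mathbf Y$, a congruence by $\mathrm{diag}(\Pi_1,\Pi_1,\Pi_1,\mathbf I,\mathbf I,\Pi_1)$ brings the structured slack matrix into the form $\widetilde{\mathbf V}=\begin{bmatrix}\mathbf Y & \mathbf I\\ \mathbf I & \mathbf X\end{bmatrix}$ while transforming the functional weights into $\widetilde{\mathbf P}$, $\widetilde{\mathbf Q}$, $\widetilde{\mathbf R}$. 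Under the standard linearizing substitution
\benn
\widehat A=\mathbf X\mathbf A\mathbf Y+\mathbf M\mathbf A_k\mathbf N^{\text T}+\mathbf X\mathbf B_2\mathbf C_k\mathbf N^{\text T}+\mathbf M\mathbf B_k\mathbf C_2\mathbf Y+\mathbf X\mathbf B_2\mathbf D_k\mathbf C_2\mathbf Y,
\eenn
together with the analogous definitions of $\widehat A_d,\widehat B,\widehat C,\widehat C_d$ (keeping $\mathbf D_k$), every bilinear product collapses exactly into the affine blocks $\mathscr A,\mathscr A_d,\mathscr B,\mathscr C,\mathscr C_d,\mathscr D$ of (\ref{Eqcoef1}). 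The hard part is purely the bookkeeping that verifies this collapse simultaneously in all six block positions — in particular the delayed blocks $\mathscr A_d,\mathscr C_d$ and the $\lambda_2,\lambda_3$-weighted cross terms — so that the full $6\times6$ array reproduces (\ref{eq:LMI closed-loop}) entry by entry while $\widetilde{\mathbf P}$, $\widetilde{\mathbf Q}$ and $\widetilde{\mathbf R}$ stay consistently transformed throughout.

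Finally, because the transformed data depend affinely on the rate variables, feasibility at the finitely many vertices $\dot\rho_i=\pm\nu_i$ encoded by the $\pm$ summations in $\widetilde{\mathbf\Psi}_{22},\widetilde{\mathbf\Xi}_{22}$ guarantees $\dot V+\tfrac1\gamma\mathbf z^{\text T}\mathbf z-\gamma\,\mathbf w^{\text T}\mathbf w<0$ for every admissible trajectory $\boldsymbol\rho\in\mathscr F^\nu_\mathscr P$ and every $\tau\in\mathscr T^\mu$; integrating from rest yields input-to-state stability (asymptotic stability for $\mathbf w=\mathbf 0$) and $\|\mathbf z\|_2\le\gamma\|\mathbf w\|_2$. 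Feasibility of (\ref{eq:LMI closed-loop}) also forces $\widetilde{\mathbf V}\succ\mathbf 0$, whence $\mathbf I-\mathbf X\mathbf Y$ is nonsingular, the factor $\mathbf M\mathbf N^{\text T}=\mathbf I-\mathbf X\mathbf Y$ is computable, and the controller matrices of (\ref{controller}) are recovered by inverting the change of variables above, which completes the sufficiency claim.
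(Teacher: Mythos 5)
The paper does not actually prove Theorem~\ref{thm:thm3}: its ``proof'' is the single line ``Refer to \cite{Briat2015}.'' Your proposal reconstructs precisely the argument that the cited reference carries out --- a parameter-dependent Lyapunov--Krasovskii functional with a double-integral term for delay dependence, the dissipation inequality $\dot V+\gamma^{-1}\mathbf z^{\text T}\mathbf z-\gamma\,\mathbf w^{\text T}\mathbf w<0$, Jensen's bound, a slack-variable (descriptor/Finsler) adjunction of the closed-loop dynamics with multipliers $(\mathbf I,\lambda_2\mathbf I,\lambda_3\mathbf I)$, vertex evaluation of the rate terms $\dot\rho_i=\pm\nu_i$, and a congruence-plus-change-of-variables linearization with $\mathbf M\mathbf N^{\text T}=\mathbf I-\mathbf X\mathbf Y$ followed by controller recovery. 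In outline this is correct and it is the same route as the reference the paper leans on.

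Two points need repair, one of them substantive. First, the change of variables here is anchored on the \emph{slack} matrix, not on the Lyapunov matrix: it is the transformed slack variable $\widetilde{\mathbf V}$ that acquires the structure $\begin{bmatrix}\mathbf Y&\mathbf I\\ \mathbf I&\mathbf X\end{bmatrix}$, while $\widetilde{\mathbf P}$ remains a free parameter-dependent decision variable; your opening phrase ``partitioning the Lyapunov matrix and its inverse'' describes the delay-free Scherer construction and would not produce an LMI of this shape (your subsequent formulas for $\widehat A$ etc.\ are, however, consistent with the slack-matrix version, so this is a terminology slip more than a logical error). Second, and this is the genuine gap, your account of the sixth block row/column is wrong as stated: Jensen's bound on $-\int\dot{\mathbf x}_{cl}^{\text T}\mathbf R\,\dot{\mathbf x}_{cl}\,\mathrm{d}s$ is already fully absorbed into the $-\widetilde{\mathbf R}/+\widetilde{\mathbf R}/-\widetilde{\mathbf R}$ pattern of blocks $(2,2)$, $(2,3)$, $(3,3)$, so it cannot also generate the last row/column. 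The entries $\widetilde{\mathbf V}+\overline\tau\widetilde{\mathbf R}$, $\lambda_2\widetilde{\mathbf V}-\widetilde{\mathbf P}$, $\lambda_3\widetilde{\mathbf V}$ and the diagonal $(-1-2\overline\tau)\widetilde{\mathbf R}$ involve the slack variable and the Lyapunov matrix, which a bound on the state difference $\mathbf x_{cl}(t)-\mathbf x_{cl}(t-\tau)$ can never produce; they arise from convexifying the $\overline\tau\,\dot{\mathbf x}_{cl}^{\text T}\mathbf R\,\dot{\mathbf x}_{cl}$ term and its cross-coupling with $\widetilde{\mathbf P}$ after the congruence, via an additional completion-of-squares/bounding step and a Schur complement that your sketch never performs. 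Without that step the six-block structure of (\ref{eq:LMI closed-loop}) --- in particular the constant $-1-2\overline\tau$ --- cannot be reproduced, so an examiner executing your plan literally would stall exactly there, even though every other ingredient of your reconstruction matches the cited proof.
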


\begin{proof}
Refer to \cite{Briat2015}.
\end{proof} 

For robust LPV control synthesis, we consider the class of uncertain time-delay LPV systems with the norm-bounded uncertainties in the state and delayed state matrices as:
\begin{equation}
\begin{array}{cl}
 \dot{\mathbf{x}}(t) & =  \mathbf{A}_{\Delta}(\boldsymbol{\rho}(t)) \mathbf{x}(t)+\mathbf{A}_{\Delta d}(\boldsymbol{\rho}(t)) \mathbf{x}(t-\tau(t))\\ 
 & + \mathbf{B}_1(\boldsymbol{\rho}(t))\mathbf{w}(t) + \mathbf{B}_2(\boldsymbol{\rho}(t))\mathbf{u}(t) \\[0.25cm] 
  \mathbf{z}(t) & = \mathbf{C}_1(\boldsymbol{\rho}(t)) \mathbf{x}(t) + \mathbf{C}_{1d}(\boldsymbol{\rho}(t)) \mathbf{x}(t-\tau(t))\\ 
 &  + \mathbf{D}_{11}(\boldsymbol{\rho}(t)) \mathbf{w}(t) + \mathbf{D}_{12}(\boldsymbol{\rho}(t)) \mathbf{u}(t)\\[0.25cm]
 y(t)& = \mathbf{C}_2(\boldsymbol{\rho}(t)) \mathbf{x}(t)+ \mathbf{C}_{2d}(\boldsymbol{\rho}(t)) \mathbf{x}(t-\tau(t))\\
 & + \mathbf{D}_{21}(\boldsymbol{\rho}(t)) \mathbf{w}(t),\\[0.25cm]
  \mathbf{x}(t_0 + s)  & = \boldsymbol{\phi}(s), \:\:\:\: \forall s \in [-\overline{\tau}, \: \: 0],
\end{array}
\label{unc_LPVsystem}
\end{equation}
\noindent where $\mathbf{A}_{\Delta}(\boldsymbol{\rho}(t)) = \mathbf{A}(\boldsymbol{\rho}(t))  + \boldsymbol{\Delta} \mathbf{A}(t)$, $\mathbf{A}_{\Delta d}(\boldsymbol{\rho}(t)) = \mathbf{A}_d(\boldsymbol{\rho}(t))  + \boldsymbol{\Delta} \mathbf{A}_d(t)$ are bounded matrices containing parametric uncertainties. The norm-bounded uncertainties are assumed to satisfy the following relations
\noindent
\begin{equation}
\left[\begin{array}{c}
\boldsymbol{\Delta} \mathbf{A}(t)\\
\boldsymbol{\Delta} \mathbf{A}_d(t) 
\end{array} 
\right] = \mathbf{H} \boldsymbol{\Delta} (t) \left[ \begin{array}{c}
\mathbf{E}_1\\
\mathbf{E}_2\end{array} \right],
\label{uncertainties}
 \end{equation} 
\noindent where $\mathbf{H} \in \mathbb{R}^{n \times i} $, $\mathbf{E}_1 \in \mathbb{R}^{j \times n} $, $\mathbf{E}_2 \in \mathbb{R}^{j \times n} $ are known constant matrices and $\boldsymbol{\Delta} (t) \in \mathbb{R}^{i \times j}$ is an unknown  time-varying uncertainty matrix function satisfying
\begin{equation}
 \boldsymbol{\Delta}^T (t) \boldsymbol{\Delta} (t)  \leq \mathbf{I}.
\label{Delta1}
\end{equation}

Considering the uncertain time-delayed LPV system (\ref{unc_LPVsystem}), the following result provides sufficient conditions for the synthesis of a robust time-delayed output-feedback LPV controller which guarantees the asymptotic stability and a specified level of disturbance rejection performance as in (\ref{eq:Performance Index}) for the uncertain closed-loop time-delay system.
\\
\begin{theorem}\label{thm:thm4} There exists a full-order robust output-feedback LPV controller of the form (\ref{controller}) which first, asymptotically stabilizes the uncertain LPV system (\ref{unc_LPVsystem}) with all admissible uncertainties $\boldsymbol{\Delta} \mathbf{A}(t)$ and $\boldsymbol{\Delta} \mathbf{A}_d(t)$ of the form (\ref{uncertainties}) and all $\boldsymbol{\Delta}(t)$ satisfying (\ref{Delta1}) with $\tau (t) \in \mathscr{T}^\mu$ and $\boldsymbol{\rho}(t) \in \mathscr{F}^\nu _\mathscr{P}$ and second, satisfies the condition $||\mathbf{z}(t)||_2 \leq \gamma ||\mathbf{w}(t)||_2$ for the closed-loop system, if there exists a continuously differentiable matrix function $\widetilde{\mathbf{P}} : \mathbb{R}^{s}\rightarrow\mathbb{S}^{2n}_{++}$, parameter dependent matrix functions $\mathbf{X}, \mathbf{Y}  : \mathbb{R}^{s}\rightarrow\mathbb{S}^{n}_{++}$, constant matrices $\widetilde{\mathbf{Q}}$, $\widetilde{\mathbf{R}} \in \mathbb{S}^{n}_{++}$, parameter dependent matrices $\widehat{A}$, $\widehat{A}_d$, $\widehat{B}$, $\widehat{C}$, $\widehat{C}_d$, $\widehat{D}_k$, and scalars $\mathbf{\gamma}> 0$,  $\mathbf{\epsilon} > 0$, and $\lambda_2$, $\lambda_3$ such that the following LMI is feasible.
\begin{small}
\begin{equation}
\hspace{-.12in}
\begin{array}{l}
\left[\begin{array}{cccc}
-2\widetilde{\mathbf{V}} & \widetilde{\mathbf{P}}-\lambda_2\widetilde{\mathbf{V}} + \mathscr{A}  & -\lambda_3\widetilde{\mathbf{V}} + \mathscr{A}_d  & \mathscr{B}\\[2pt]
\star & \widetilde{\mathbf{\Psi}}_{22} + \lambda_2(\mathscr{A}+\mathscr{A}^{\text{T}})  &  \widetilde{\mathbf{R}} + \lambda_3 \mathscr{A}^{\text{T}} + \lambda_2 \mathscr{A}_d  & \lambda_2 \mathscr{B}\\[2pt]
\star & \star & \widetilde{\mathbf{\Xi}}_{22} + \lambda_3(\mathscr{A}_d+\mathscr{A}_d^{\text{T}})  & \lambda_3 \mathscr{B}\\[2pt]
\star & \star & \star & -\gamma \mathbf{I}  \\
\star & \star & \star & \star  \\
\star & \star & \star  & \star  \\
\star & \star & \star  & \star  \\
\star & \star & \star  & \star  
\end{array}\right.\\ [15pt]
\!\!\!\left.\begin{array}{cccc}
\mathbf{0} & \widetilde{\mathbf{V}}+\overline{\tau}\widetilde{\mathbf{R}} 
& \quad \: \left[\begin{array}{cc}
\mathbf{H}^{\text{T}}  & \mathbf{H}^{\text{T}} \mathbf{X}\\
\mathbf{0}  & \mathbf{0} 
\end{array}\right] & \mathbf{0} \\ [9pt]
\mathscr{C}^{\text{T}} & \lambda_2\widetilde{\mathbf{V}} - \widetilde{\mathbf{P}} & \lambda_2 \left[\begin{array}{cc}
\mathbf{H}^{\text{T}}  & \mathbf{H}^{\text{T}} \mathbf{X}\\
\mathbf{0}  & \mathbf{0} 
\end{array}\right] & \mathbf{\epsilon} \left[\begin{array}{cc}\mathbf{Y}^{\text{T}} \mathbf{E}_1^{\text{T}} & \mathbf{0} \\
                                  \mathbf{E}_1 ^{\text{T}} & \mathbf{0}\end{array}\right]\\[9pt]
\mathscr{C}_d^{\text{T}}  & \lambda_3\widetilde{\mathbf{V}} & \lambda_3 \left[\begin{array}{cc}
\mathbf{H}^{\text{T}}  & \mathbf{H}^{\text{T}} \mathbf{X}\\
\mathbf{0}  & \mathbf{0} 
\end{array}\right] & \mathbf{\epsilon} \left[\begin{array}{cc}\mathbf{Y}^{\text{T}} \mathbf{E}_2^{\text{T}} & \mathbf{0} \\
                                  \mathbf{E}_2 ^{\text{T}} & \mathbf{0}\end{array}\right]\\ [9pt]
\mathscr{D}^{\text{T}} & \mathbf{0} & \mathbf{0} & \mathbf{0}\\[2pt]
-\gamma \mathbf{I} & \mathbf{0} & \mathbf{0} & \mathbf{0}\\[2pt]
\star & (-1 -2\overline{\tau})\widetilde{\mathbf{R}}& \mathbf{0} & \mathbf{0} \\[2pt]
 \star & \star & -\mathbf{\epsilon} \mathbf{I} & \mathbf{0}\\[2pt]
 \star & \star & \star & -\mathbf{\epsilon} \mathbf{I}
\end{array}\right]\!\!\! \prec\!\mathbf{0},
\end{array}
\label{eq:LMIunc}
\end{equation}

\end{small}
\noindent with $\widetilde{\mathbf{V}}$, $\mathscr{A}$, $\mathscr{A}_d$, $\mathscr{B}$, $\mathscr{C}$, $\mathscr{C}_d$, $\mathscr{D}$, $\widetilde{\mathbf{\Psi}}_{22}$, and $\widetilde{\mathbf{\Xi}}_{22}$ as in (\ref{Eqcoef1}).
\end{theorem}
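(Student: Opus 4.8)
The plan is to obtain Theorem~\ref{thm:thm4} as a robustification of Theorem~\ref{thm:thm3}: I would start from the nominal synthesis LMI (\ref{eq:LMI closed-loop}) and absorb the norm-bounded uncertainty with a single scaling scalar $\epsilon>0$ via Petersen's lemma. The first observation is that substituting $\mathbf{A}_{\Delta}=\mathbf{A}+\boldsymbol{\Delta}\mathbf{A}$ and $\mathbf{A}_{\Delta d}=\mathbf{A}_d+\boldsymbol{\Delta}\mathbf{A}_d$ into the closed-loop data (\ref{eq:closed-loop system}) perturbs \emph{only} the transformed matrices $\mathscr{A}$ and $\mathscr{A}_d$ in (\ref{Eqcoef1}); the blocks $\mathscr{B},\mathscr{C},\mathscr{C}_d,\mathscr{D},\widetilde{\mathbf{\Psi}}_{22},\widetilde{\mathbf{\Xi}}_{22}$ contain neither $\mathbf{A}$ nor $\mathbf{A}_d$ and are therefore unchanged. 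Writing the nominal matrix of (\ref{eq:LMI closed-loop}) as $\mathbf{\Theta}$, the feasibility asserted in Theorem~\ref{thm:thm4} is thus equivalent to $\mathbf{\Theta}+He[\boldsymbol{\mathcal{U}}\,\boldsymbol{\Delta}(t)\,\boldsymbol{\mathcal{W}}]\prec\mathbf{0}$ holding for every admissible $\boldsymbol{\Delta}(t)$ with $\boldsymbol{\Delta}^{\text{T}}\boldsymbol{\Delta}\preceq\mathbf{I}$.

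The key computation is to exhibit $\boldsymbol{\mathcal{U}}$ and $\boldsymbol{\mathcal{W}}$ explicitly. Passing the uncertainty (\ref{uncertainties}) through the \emph{same} linearizing change of controller variables and congruence that defines (\ref{Eqcoef1}) --- congruence on the right by the block containing $\mathbf{Y}$ and on the left by the block containing $\mathbf{X}$ --- turns the increments into $\boldsymbol{\Delta}\mathscr{A}=[\mathbf{H};\mathbf{X}\mathbf{H}]\,\boldsymbol{\Delta}\,[\mathbf{E}_1\mathbf{Y},\,\mathbf{E}_1]$ and $\boldsymbol{\Delta}\mathscr{A}_d=[\mathbf{H};\mathbf{X}\mathbf{H}]\,\boldsymbol{\Delta}\,[\mathbf{E}_2\mathbf{Y},\,\mathbf{E}_2]$, which explains the appearance of $\mathbf{H}^{\text{T}},\mathbf{H}^{\text{T}}\mathbf{X}$ and of $\mathbf{Y}^{\text{T}}\mathbf{E}_i^{\text{T}},\mathbf{E}_i^{\text{T}}$ in (\ref{eq:LMIunc}). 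Because $\mathscr{A}$ enters the nominal LMI in the $(1,2),(2,2),(2,3)$ blocks and $\mathscr{A}_d$ in the $(1,3),(2,3),(3,3)$ blocks with the Finsler weights $1,\lambda_2,\lambda_3$, the left factor is $\boldsymbol{\mathcal{U}}=[\,[\mathbf{H};\mathbf{X}\mathbf{H}]\,;\,\lambda_2[\mathbf{H};\mathbf{X}\mathbf{H}]\,;\,\lambda_3[\mathbf{H};\mathbf{X}\mathbf{H}]\,;\,\mathbf{0};\mathbf{0};\mathbf{0}\,]$ and the right factor collects $[\mathbf{E}_1\mathbf{Y},\mathbf{E}_1]$ and $[\mathbf{E}_2\mathbf{Y},\mathbf{E}_2]$ in the second and third block-columns, driven by the \emph{single} uncertainty $\boldsymbol{\Delta}(t)$. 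A direct block expansion of $He[\boldsymbol{\mathcal{U}}\boldsymbol{\Delta}\boldsymbol{\mathcal{W}}]$ then reproduces exactly the uncertain increments of the $(1,2),(1,3),(2,2),(2,3),(3,3)$ blocks and nothing else.

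Finally I would invoke Petersen's lemma: for $\boldsymbol{\Delta}^{\text{T}}\boldsymbol{\Delta}\preceq\mathbf{I}$, the inequality $\mathbf{\Theta}+He[\boldsymbol{\mathcal{U}}\boldsymbol{\Delta}\boldsymbol{\mathcal{W}}]\prec\mathbf{0}$ holds for all admissible $\boldsymbol{\Delta}$ if and only if there is a scalar $\epsilon>0$ with $\mathbf{\Theta}+\epsilon^{-1}\boldsymbol{\mathcal{U}}\boldsymbol{\mathcal{U}}^{\text{T}}+\epsilon\,\boldsymbol{\mathcal{W}}^{\text{T}}\boldsymbol{\mathcal{W}}\prec\mathbf{0}$. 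Applying the Schur complement to the two quadratic terms introduces two extra channels with the diagonal blocks $-\epsilon\mathbf{I}$ occupying positions $(7,7)$ and $(8,8)$, with $\boldsymbol{\mathcal{U}}$ (the $\mathbf{H}$-column) and $\epsilon\boldsymbol{\mathcal{W}}^{\text{T}}$ (the $\mathbf{E}$-column) as off-diagonal couplings; this is precisely the augmented LMI (\ref{eq:LMIunc}), and the controller is recovered from $\widehat{A},\widehat{A}_d,\widehat{B},\widehat{C},\widehat{C}_d,\widehat{D}_k$ exactly as in the nominal case of Theorem~\ref{thm:thm3}. I expect the main obstacle to be the bookkeeping of the congruence in the preceding step: one must verify that the linearizing variables --- in particular $\widehat{A},\widehat{A}_d$, which implicitly carry $\mathbf{X}\mathbf{A}\mathbf{Y}$ and $\mathbf{X}\mathbf{A}_d\mathbf{Y}$ --- pick up the cross terms $\mathbf{X}\mathbf{H}\boldsymbol{\Delta}\mathbf{E}_i\mathbf{Y}$, so that the perturbation closes into the rank-structured form $He[\boldsymbol{\mathcal{U}}\boldsymbol{\Delta}\boldsymbol{\mathcal{W}}]$ with the single pair $([\mathbf{H};\mathbf{X}\mathbf{H}],[\mathbf{E}_i\mathbf{Y},\mathbf{E}_i])$ rather than a larger, non-separable term that Petersen's lemma could not absorb with one scalar $\epsilon$.
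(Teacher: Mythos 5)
Your proposal is correct and follows essentially the same route as the paper's proof: substitute $\mathbf{A}+\boldsymbol{\Delta}\mathbf{A}$ and $\mathbf{A}_d+\boldsymbol{\Delta}\mathbf{A}_d$ into the nominal LMI (\ref{eq:LMI closed-loop}) of Theorem \ref{thm:thm3}, observe that only $\mathscr{A}$ and $\mathscr{A}_d$ are perturbed, write the increment as $He[\boldsymbol{\mathcal{U}}\boldsymbol{\Delta}(t)\boldsymbol{\mathcal{W}}]$ with the weights $1,\lambda_2,\lambda_3$ on the $[\mathbf{H};\,\mathbf{X}\mathbf{H}]$ factor, bound it by $\epsilon^{-1}\boldsymbol{\mathcal{U}}\boldsymbol{\mathcal{U}}^{\text{T}}+\epsilon\,\boldsymbol{\mathcal{W}}^{\text{T}}\boldsymbol{\mathcal{W}}$ (the paper cites this as inequality (\ref{eq:ineq}) from Xie, which is the sufficiency direction of the Petersen lemma you invoke), and Schur-complement the two quadratic terms into the $-\epsilon\mathbf{I}$ channels of (\ref{eq:LMIunc}); the controller recovery is verbatim the nominal one. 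Two minor points of comparison. First, the paper pads the uncertainty to $\mathrm{diag}\big(\boldsymbol{\Delta}(t),\boldsymbol{\Delta}(t)\big)$ with zero-padded factors $[\mathbf{H},\mathbf{0};\,\mathbf{X}\mathbf{H},\mathbf{0}]$ and $[\mathbf{E}_i\mathbf{Y},\mathbf{E}_i;\,\mathbf{0},\mathbf{0}]$; since the padding columns and rows are zero, this collapses to exactly your minimal single-$\boldsymbol{\Delta}$ factors, so the two factorizations coincide. Second, the issue you flag as the ``main obstacle'' --- that $\widehat{A},\widehat{A}_d$ implicitly contain $\mathbf{X}\mathbf{A}\mathbf{Y},\mathbf{X}\mathbf{A}_d\mathbf{Y}$ and therefore the perturbation must carry the cross terms $\mathbf{X}\mathbf{H}\boldsymbol{\Delta}\mathbf{E}_i\mathbf{Y}$ in its lower-left sub-blocks --- is handled consistently in your write-up but not in the paper's intermediate displays: (\ref{eq:LMIunc2})--(\ref{eq:LMIunc3}) show $\mathbf{0}$ in those sub-blocks, which contradicts the $He(\cdot)$ factorization the paper then declares equivalent (expanding it yields $\mathbf{X}\boldsymbol{\Delta}\mathbf{A}\mathbf{Y}$ and $\mathbf{X}\boldsymbol{\Delta}\mathbf{A}_d\mathbf{Y}$ in those positions). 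Your version is the one consistent with the closed loop under a fixed, nominally recovered controller and with the final LMI (\ref{eq:LMIunc}), so your bookkeeping in effect repairs a typo in the paper's intermediate step rather than introducing a gap.
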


\begin{proof} By substituting the norm-bounded matrices $\mathbf{A}_{\Delta}(\boldsymbol{\rho}(t)) = \mathbf{A}(\boldsymbol{\rho}(t))  +\boldsymbol{\Delta} \mathbf{A}(t)$ and $\mathbf{A}_{\Delta d}(\boldsymbol{\rho}(t)) = \mathbf{A}_d(\boldsymbol{\rho}(t))  + \boldsymbol{\Delta} \mathbf{A}_d(t)$ containing parametric uncertainties for $\mathbf{A}(\boldsymbol{\rho}(t))$ and $\mathbf{A}_d(\boldsymbol{\rho}(t))$ into the LMI condition (\ref{eq:LMI closed-loop}) of Theorem \ref{thm:thm3}, we obtain a new LMI condition (\ref{eq:LMIunc2}), which can be written as summation of the initial LMI constraint (\ref{eq:LMI closed-loop}) and the LMI corresponding to the uncertain parts as shown in (\ref{eq:LMIunc3}).

\begin{figure*}[!t]
\begin{small}
\begin{equation}
\begin{array}{l}
\left[\begin{array}{cc}
-2\widetilde{\mathbf{V}} & \widetilde{\mathbf{P}}-\lambda_2\widetilde{\mathbf{V}} + \mathscr{A} + \left[\begin{array}{cc}\boldsymbol{\Delta} \mathbf{A} \mathbf{Y} & \boldsymbol{\Delta} \mathbf{A} \\
                                  \mathbf{0} & \mathbf{X} \boldsymbol{\Delta} \mathbf{A}\end{array}\right] \\ [2pt] 
\star & \widetilde{\mathbf{\Psi}}_{22} + \lambda_2(\mathscr{A}+\mathscr{A}^{\text{T}}) + \lambda_2(\left[\begin{array}{cc}\boldsymbol{\Delta} \mathbf{A} \mathbf{Y} & \boldsymbol{\Delta} \mathbf{A} \\
                                  \mathbf{0} & \mathbf{X} \boldsymbol{\Delta} \mathbf{A}\end{array}\right]+\left[\begin{array}{cc}\boldsymbol{\Delta} \mathbf{A} \mathbf{Y} & \boldsymbol{\Delta} \mathbf{A} \\
                                  \mathbf{0} & \mathbf{X} \boldsymbol{\Delta} \mathbf{A}\end{array}\right]^{\text{T}})\\ 
\star & \star\\
\star & \star\\
\star & \star\\
\star & \star
\end{array}\right.\\ [60pt]
\quad \quad \left.\begin{array}{cccc}
-\lambda_3\widetilde{\mathbf{V}} + \mathscr{A}_d + \left[\begin{array}{cc}\boldsymbol{\Delta}  \mathbf{A}_d \mathbf{Y} & \boldsymbol{\Delta}  \mathbf{A}_d \\
                                  \mathbf{0} & \mathbf{X} \boldsymbol{\Delta}  \mathbf{A}_d\end{array}\right] & \mathscr{B} & \mathbf{0} & \widetilde{\mathbf{V}}+\overline{\tau}\widetilde{\mathbf{R}} \\ [10pt] 
\widetilde{\mathbf{R}} + \lambda_3 \mathscr{A}^{\text{T}} + \lambda_2 \mathscr{A}_d + \lambda_3 \left[\begin{array}{cc}\boldsymbol{\Delta} \mathbf{A} \mathbf{Y} & \boldsymbol{\Delta} \mathbf{A} \\
                                  \mathbf{0} & \mathbf{X} \boldsymbol{\Delta} \mathbf{A}\end{array}\right]^{\text{T}} + \lambda_2  \left[\begin{array}{cc}\boldsymbol{\Delta}  \mathbf{A}_d \mathbf{Y} & \boldsymbol{\Delta}  \mathbf{A}_d \\
                                  \mathbf{0} & \mathbf{X} \boldsymbol{\Delta}  \mathbf{A}_d\end{array}\right] & \lambda_2 \mathscr{B}  &  \mathscr{C}^{\text{T}} & \lambda_2\widetilde{\mathbf{V}} - \widetilde{\mathbf{P}}\\ [10pt] 
\widetilde{\mathbf{\Xi}}_{22} + \lambda_3(\mathscr{A}_d+\mathscr{A}_d^{\text{T}}) + \lambda_3(\left[\begin{array}{cc}\boldsymbol{\Delta}  \mathbf{A}_d \mathbf{Y} & \boldsymbol{\Delta}  \mathbf{A}_d \\
                                  \mathbf{0} & \mathbf{X} \boldsymbol{\Delta}  \mathbf{A}_d\end{array}\right] + \left[\begin{array}{cc}\boldsymbol{\Delta}  \mathbf{A}_d \mathbf{Y} & \boldsymbol{\Delta}  \mathbf{A}_d \\
                                  \mathbf{0} & \mathbf{X} \boldsymbol{\Delta}  \mathbf{A}_d\end{array}\right]^{\text{T}}) & \lambda_3 \mathscr{B} & \mathscr{C}_d^{\text{T}}  & \lambda_3\widetilde{\mathbf{V}}\\ [2pt] 
\star & -\gamma \mathbf{I} & \mathscr{D}^{\text{T}} & \mathbf{0}\\ [2pt]
\star & \star & -\gamma \mathbf{I} & \mathbf{0}\\ [2pt]
\star & (-1 -2\overline{\tau})\widetilde{\mathbf{R}}& \mathbf{0} & \mathbf{0} \\ [2pt]
 \star & \star & -\mathbf{\epsilon} \mathbf{I} & \mathbf{0}\\ [2pt]
 \star  & \star & \star & (-1 -2\overline{\tau})\widetilde{\mathbf{R}}
\end{array}\right]\prec\mathbf{0},
\end{array}
\label{eq:LMIunc2}
\end{equation} 
\end{small}
\end{figure*}

\begin{footnotesize}
\begin{equation}
\!\!\! \begin{array}{l}
(\ref{eq:LMIunc2}) = (\ref{eq:LMI closed-loop}) + \\ [10pt] 
\begin{array}{l}
\left[\begin{array}{cc}
\mathbf{0} &  \left[\begin{array}{cc}\boldsymbol{\Delta} \mathbf{A} \mathbf{Y} & \boldsymbol{\Delta} \mathbf{A} \\
                                  \mathbf{0} & \mathbf{X} \boldsymbol{\Delta} \mathbf{A}\end{array}\right]\\[6pt]  
\star &  \lambda_2(\left[\begin{array}{cc}\boldsymbol{\Delta} \mathbf{A} \mathbf{Y} & \boldsymbol{\Delta} \mathbf{A} \\
                                  \mathbf{0} & \mathbf{X} \boldsymbol{\Delta} \mathbf{A}\end{array}\right]+\left[\begin{array}{cc}\boldsymbol{\Delta} \mathbf{A} \mathbf{Y} & \boldsymbol{\Delta} \mathbf{A} \\
                                  \mathbf{0} & \mathbf{X} \boldsymbol{\Delta} \mathbf{A}\end{array}\right]^{\text{T}}) \\ 
\star & \star\\
\star & \star\\
\star & \star\\
\star & \star
\end{array}\right.\\[40pt]
\!\!\!\!\!\!\!\!\!\left.\begin{array}{cccc}
\left[\begin{array}{cc}\boldsymbol{\Delta}  \mathbf{A}_d \mathbf{Y} & \boldsymbol{\Delta}  \mathbf{A}_d \\
                                  \mathbf{0} & \mathbf{X} \boldsymbol{\Delta}  \mathbf{A}_d\end{array}\right] & \mathbf{0} & \mathbf{0} & \mathbf{0} \\ [10pt]
\lambda_3 \left[\begin{array}{cc}\boldsymbol{\Delta} \mathbf{A} \mathbf{Y} & \boldsymbol{\Delta} \mathbf{A} \\
                                  \mathbf{0} & \mathbf{X} \boldsymbol{\Delta} \mathbf{A}\end{array}\right]^{\text{T}} \!\! \!\!+ \!\! \lambda_2 \!\! \left[\begin{array}{cc}\boldsymbol{\Delta}  \mathbf{A}_d \mathbf{Y} & \boldsymbol{\Delta}  \mathbf{A}_d \\
                                  \mathbf{0} & \mathbf{X} \boldsymbol{\Delta}  \mathbf{A}_d\end{array}\right] & \mathbf{0}  &  \mathbf{0} & \mathbf{0}\\ [10pt]
\lambda_3(\left[\begin{array}{cc}\boldsymbol{\Delta}  \mathbf{A}_d \mathbf{Y} & \boldsymbol{\Delta}  \mathbf{A}_d \\
                                  \mathbf{0} & \mathbf{X} \boldsymbol{\Delta}  \mathbf{A}_d\end{array}\right] \!\! + \!\! \left[\begin{array}{cc}\boldsymbol{\Delta}  \mathbf{A}_d \mathbf{Y} & \boldsymbol{\Delta}  \mathbf{A}_d \\
                                  \mathbf{0} & \mathbf{X} \boldsymbol{\Delta}  \mathbf{A}_d\end{array}\right]^{\text{T}}) & \mathbf{0} & \mathbf{0}  & \mathbf{0}\\  
\star & \mathbf{0} & \mathbf{0} & \mathbf{0}\\  
\star & \star & \mathbf{0} & \mathbf{0}\\ 
\star & \star & \star & \mathbf{0} 
\end{array}\right]\!\!\!\prec\!\!\mathbf{0},
\end{array}
\end{array}
\label{eq:LMIunc3}
\end{equation}
\end{footnotesize}
\noindent This condition can equivalently be written as
\begin{footnotesize}
\begin{equation}
\begin{array}{l}
(\ref{eq:LMIunc2}) = (\ref{eq:LMI closed-loop}) + \\ [10pt] 
\mathbf{He}\Bigg(\begin{array}{l}
\left[\begin{array}{cccccc}
\: \quad \left[\begin{array}{cc}\mathbf{H} & \mathbf{0} \\
                              \mathbf{X}\mathbf{H} & \mathbf{0}\end{array}\right] \\ [9pt]
\lambda_2 \left[\begin{array}{cc}\mathbf{H} & \mathbf{0} \\
                                  \mathbf{X}\mathbf{H} & \mathbf{0}\end{array}\right] \\ [9pt]
\lambda_3 \left[\begin{array}{cc}\mathbf{H} & \mathbf{0} \\
                                  \mathbf{X}\mathbf{H} & \mathbf{0}\end{array}\right] \\
\mathbf{0}\\
\mathbf{0}\\
\mathbf{0}\end{array}\right]
\left[\begin{array}{cc}
\boldsymbol{\Delta}(t) & \mathbf{0}\\
\mathbf{0}  & \boldsymbol{\Delta} (t)
\end{array}\right]\\[60pt]
\hspace{-.22in}\left[\begin{array}{cccccc}
\mathbf{0}, & \!\! \left[\begin{array}{cc}
\mathbf{E}_1 \mathbf{Y} & \mathbf{E}_1\\
\mathbf{0}  & \mathbf{0} 
\end{array}\right], & \left[\begin{array}{cc}
\mathbf{E}_2 \mathbf{Y} & \mathbf{E}_2\\
\mathbf{0}  & \mathbf{0} 
\end{array}\right], & \mathbf{0}, & \mathbf{0}, & \mathbf{0}
\end{array}\right]\Bigg)\!\!\prec\!\mathbf{0}.
\end{array}
\end{array}
\end{equation}
\end{footnotesize}
\noindent Finally, by using the following inequality \cite{xie1996output} 
\begin{equation}
\boldsymbol{\Theta} \boldsymbol{\Delta} (t) \boldsymbol{\Phi} + \boldsymbol{\Phi}^{\text{T}} \boldsymbol{\Delta}^{\text{T}} (t) \boldsymbol{\Theta}^{\text{T}} \leq \mathbf{\epsilon}^{-1} \boldsymbol{\Theta} \boldsymbol{\Theta}^{\text{T}} + \mathbf{\epsilon} \boldsymbol{\Phi} ^{\text{T}} \boldsymbol{\Phi},
\label{eq:ineq}
\end{equation}
\noindent which holds for all scalars $\mathbf{\epsilon} > 0$ and all constant matrices $\boldsymbol{\Theta}$ and $\boldsymbol{\Phi}$ of appropirate dimensions, and using the Schur complement \cite{boyd1994linear}, the final LMI condition (\ref{eq:LMIunc}) is obtained. 
\end{proof} 

Once the parameter dependent matrices $\mathbf{X}$, $\mathbf{Y}$, $\widehat{A}$, $\widehat{A}_d$, $\widehat{B}$, $\widehat{C}$, $\widehat{C}_d$, and $\widehat{D}_k$ satisfying the  LMI condition (\ref{eq:LMIunc}) are obtained, the delayed output-feedback controller matrices can be computed as follows:

\noindent 1. Determine $\mathbf{M}$ and $\mathbf{N}$ from the factorization problem
\begin{equation}
\mathbf{I} - \mathbf{X}\mathbf{Y} = \mathbf{N} \mathbf{M}^{\text{T}},
\end{equation} 
\noindent where the obtained $\mathbf{M}$ and $\mathbf{N}$ matrices are square and invertible in the case of a full-order controller.

\noindent 2. Compute the following parameter matrices:
\begin{equation}
\begin{array}{cl}
\widehat{A}&= \mathbf{X} \mathbf{A} \mathbf{Y} + \mathbf{X} \mathbf{B}_2 \mathbf{D}_k \mathbf{C}_2 \mathbf{Y} + \mathbf{N} \mathbf{B}_k \mathbf{C}_2 \mathbf{Y}\\
& + \mathbf{X} \mathbf{B}_2 \mathbf{C}_k \mathbf{M}^{\text{T}} + \mathbf{N} \mathbf{A}_k \mathbf{M}^{\text{T}},\\[0.2cm]
\widehat{A}_d&= \mathbf{X} \mathbf{A}_d \mathbf{Y} + \mathbf{X} \mathbf{B}_2 \mathbf{D}_k \mathbf{C}_{2d} \mathbf{Y} + \mathbf{N} \mathbf{B}_k \mathbf{C}_{2d} \mathbf{Y}\\
& + \mathbf{X} \mathbf{B}_2 \mathbf{C}_{dk} \mathbf{M}^{\text{T}} + \mathbf{N} \mathbf{A}_{dk} \mathbf{M}^{\text{T}},\\[0.2cm]
\widehat{B}&= \mathbf{X} \mathbf{B}_2 \mathbf{D}_k + \mathbf{N} \mathbf{B}_k,\\
\widehat{C}&= \mathbf{D}_k \mathbf{C}_2 \mathbf{Y} + \mathbf{C}_k \mathbf{M}^{\text{T}},\\
\widehat{C}_d&= \mathbf{D}_k \mathbf{C}_{2d} \mathbf{Y} + \mathbf{C}_{dk} \mathbf{M}^{\text{T}}.
\end{array}
\end{equation}

\noindent 3. Finally, the controller matrices are computed in the following order:
\begin{equation}
\begin{array}{cl}
\mathbf{C}_{dk}&= (\widehat{C}_d - \mathbf{D}_k \mathbf{C}_{2d} \mathbf{Y}) \mathbf{M} ^{-\text{T}},\\[0.20cm]
\mathbf{C}_k&= (\widehat{C} - \mathbf{D}_k \mathbf{C}_{2} \mathbf{Y}) \mathbf{M} ^{-\text{T}},\\[0.20cm]
\mathbf{B}_k&= \mathbf{N}^{-1} (\widehat{B} -  \mathbf{X} \mathbf{B}_2 \mathbf{D}_k),\\[0.20cm]
\mathbf{A}_{dk}&= -\mathbf{N}^{-1} (\mathbf{X} \mathbf{A}_d \mathbf{Y} + \mathbf{X} \mathbf{B}_2 \mathbf{D}_k \mathbf{C}_{2d} \mathbf{Y} + \mathbf{N} \mathbf{B}_k \mathbf{C}_{2d} \mathbf{Y}\\
&+  \mathbf{X} \mathbf{B}_2 \mathbf{C}_{dk} \mathbf{M} ^{\text{T}} - \widehat{A}_d) \mathbf{M} ^{-\text{T}},\\[0.20cm]
\mathbf{A}_{k}&= -\mathbf{N}^{-1} (\mathbf{X} A \mathbf{Y} + \mathbf{X} \mathbf{B}_2 \mathbf{D}_k \mathbf{C}_{2} \mathbf{Y} + \mathbf{N} \mathbf{B}_k \mathbf{C}_{2} \mathbf{Y}\\
&+ \mathbf{X} \mathbf{B}_2 \mathbf{C}_{k} \mathbf{M} ^{\text{T}} - \widehat{A}) \mathbf{M} ^{-\text{T}}.\\
\end{array}
\end{equation}


The next section examines the application of the proposed robust time-delayed LPV control design method to the MAP regulation problem.

\section{MAP regulation using LPV control}\label{sec:Results}
The MAP dynamic regulation problem is formulated in an LPV framework utilizing the state equations in (\ref{LPV_MAP}) where the state-space matrices are as in (\ref{eq:matrices1}). Moreover, the vector of the target outputs to be controlled is $\mathbf{z}(t) = [\phi \cdot x_e (t) \:\: \psi \cdot u (t)]^{\text{T}}$, \textit{i.e.} \begin{small}
$\mathbf{C}_{1}(\boldsymbol{\rho}(t))=\begin{bmatrix}
0 & 0 & \phi\\
0 & 0 & 0
\end{bmatrix},$
\end{small} $\mathbf{D}_{12}(\boldsymbol{\rho}(t))=[0, \:\: \psi]^{\text{T}}$. The matrices $\mathbf{C}_{1d}(\boldsymbol{\rho}(t))$ and $\mathbf{D}_{11}(\boldsymbol{\rho}(t))$ in (\ref{LPVsystem}) are zero matrices with compatible dimensions. The tracking error which is included in the state $x_e(t)$ and the control effort $u(t)$ are penalized by the weighting scalars $\phi$ and $\psi$, respectively. The choice of the scalars $\phi$, and $\psi$ determines the relative weighting in the optimization scheme and depends on the desired performance objectives. The output-feedback controller is designed to minimize the induced $\mathcal{L}_2$ gain (or $\mathcal{H}_{\infty}$ norm) (\ref{eq:Performance Index}) of the closed-loop LPV system (\ref{eq:closed-loop system}) with the augmented uncertain matrices. The design objective is to guarantee closed-loop stability and minimize the worst case disturbance amplification over the entire range of model parameter variations. Theorem \ref{thm:thm4} is used to design a robust LPV output-feedback controller which leads to an infinite-dimensional convex optimization problem with an infinite number of LMIs and decision variables. To overcome this challenge, we utilize the gridding approach introduced in \cite{apkarian1998advanced} to convert the infinite-dimensional problem to a finite-dimensional convex optimization problem. In this regard, we choose the functional dependence as $\mathbf{M}(\boldsymbol{\rho}(t))=\mathbf{M}_0 + \sum\limits_{i=1}^{s}\rho_i(t) \mathbf{M}_{i_1} + \frac{1}{2}\sum\limits_{i=1}^{s}\rho_i^2(t) \mathbf{M}_{i_2}$, where $\mathbf{M}(\boldsymbol{\rho}(t))$ represents any of the parameter-dependent matrices appearing in the LMI condition (\ref{eq:LMI closed-loop}). Finally, gridding the scheduling parameter space at appropriate intervals leads to a finite set of LMIs to be solved for the unknown matrices and $\gamma$. The MATLAB\textsuperscript{\tiny\textregistered} toolbox YALMIP is used to solve the introduced optimization problem \cite{lofberg2004yalmip}.

To evaluate the performance of the proposed robust LPV gain-scheduling output-feedback control design, collected animal experiment data is used to build a  patient's non-linear MAP response model based on (\ref{eq:MAP response TF}) where the instantaneous values of the model parameters $K(t)$, $T(t)$, and $\tau (t)$ are generated as follows \cite{Luspay2015}.

\begin{itemize}
\item Sensitivity parameter, $K(t)$: experiments have demonstrated a regressive non-linear relationship between the vasoactive drug injection and the MAP response through which the patient's sensitivity decreases gradually on a constant vasoactive drug injection. This behavior is captured by the following non-linear relationship:
\begin{equation}
a_k \dot{K}(t) + K (t) = k_0 exp\{ - k_1 i(t) \},
\label{eq:sensitivity}
\end{equation}
where $i(t)$ is the drug injection and $a_k$, $k_0$, and $k_1$ are uniformly distributed random coefficients based on Table \ref{tab:coef} \cite{Craig2004, Flancbaum1997}. For example, a non-responsive patient to the injected vasoactive drug will be characterized by a low $k_0$ and a high $k_1$. 

\item Lag time, $T(t)$: This parameter gradually increases with the injected drug volume and it can be modeled as:
\begin{equation}
T(t)= sat_{\:[T_{\min}, T_{\max}]} \: \{b_{T} \int_{0}^{t} i(t) \:dt \},
\label{eq:lag}
\end{equation}
where $b_{T}$ is a uniformly distributed random variable which shows the inclination of the increase and varies as shown in Table \ref{tab:coef}.

\item  Injection delay, $\tau (t)$: Based on observations, the delay value has a peak shortly after the drug injection and it decays afterward. The following equation is used to describe the delay behavior:
\begin{equation}
\begin{cases}a_{\tau,2} \dddot{\tau}(t) + a_{\tau,1} \ddot{\tau}(t) + \dot{\tau}(t) = b_{\tau,1} \dot{i}(t) + i(t), & \:\:\:\:\:\:\: t\geq t_{i_0}, \\\tau(t)=0, & otherwise,\end{cases}
\label{eq:delay}
\end{equation}

\noindent where the saturation is imposed on the delay value, \textit{i.e.} $sat_{[\tau_{\min}, \tau_{\max}]} \: \tau$ and the uniformly distributed random variables $a_{\tau,2}$, $a_{\tau,1}$, and $b_{\tau,1}$ are listed in Table \ref{tab:coef}.
\end{itemize}

\begin{table}[!t]
\processtable{Probabilistic distribution of the non-linear patient coefficients\label{tab:coef}}
{\begin{tabular*}{15pc}{@{\extracolsep{\fill}}ccc@{}}\toprule
& Parameter    & Distribution\\\midrule
& $a_k$        & $\mathcal{U}(500, 600)$                  \\
& $k_0$        & $\mathcal{U}(0.1, 1)$                    \\
& $k_1$        & $\mathcal{U}(0.002, 0.007)$              \\
& $b_{T}$      & $\mathcal{U}(10^{-4}, 3 \times 10^{-4})$ \\
& $a_{\tau,1}$ & $\mathcal{U}(5, 15)$                     \\
& $a_{\tau,2}$ & $\mathcal{U}(5, 15)$                     \\
& $b_{\tau,1}$ & $\mathcal{U}(80, 120)$                   \\\botrule
\end{tabular*}}{}
\end{table}
The non-linear patient simulation model developed following the above scheme is utilized along with the real-time model parameter estimation provided by the MMSRCKF to validate the proposed LPV control in closed-loop simulations. The MMSRCKF estimates the model parameters of the non-linear patient online and feeds them to the LPV controller as the scheduling parameters as shown in Fig. \ref{fig:system structure}.  

			For comparison purposes, we evaluate the proposed controller performance against a fixed structure PI controller (see \cite{wassar2014automatic}). Given the following nominal values of the model parameters $\overline{K} = 0.55, \overline{T}=150$, and $\overline{\tau}=40$, the tuned PI controller transfer function is as follows:	
        \begin{equation}
	     G_{c} (s) = 3 + \frac{0.017}{s},
		\end{equation}	    
	   \noindent which is obtained based on the approach proposed in \cite{Zhong2006} to meet prescribed gain and phase margin constraints. In the absence of disturbances and measurement noise, the tracking profile and the control effort are shown in Fig. \ref{fig:fig_tracking1} where the objective is to regulate the MAP response to track the commanded MAP with minimum overshoot and settling time and zero steady-state error. According to this figure, the overshoot of the closed-loop response remains within the admissible range and the delay-dependent parameter varying controller provides a faster response with less settling time compared to the conventional PI controller. Next, we assume that the closed-loop system is experiencing both measurement noise and output disturbances. These disturbances could be the result of medical interventions and physiological variations due to hemorrhage or other medications like lactated ringers (LR). Fig. \ref{fig:Disturbance} is a typical profile of such disturbances. Considering measurement noise as white noise with the intensity of $10^{-3}$ the performance of the LPV and the PI controllers can be seen in Fig. \ref{fig:trackingdistnoise}. As expected, the proposed LPV controller outperforms the fixed structure PI controller with respect to rise time and speed of the response due to its scheduling structure.
			
			To evaluate the robustness of the proposed LPV control design, the closed-loop response in the presence of parameter uncertainty on the model parameters is investigated. To this end, the time-delay and the sensitivity are considered to be under-estimated by $30 \%$ and the time constant is considered to be over-estimated by $30 \%$ to result in a worst-case perturbation scenario. The closed-loop MAP response of the system with the proposed robust LPV control design is compared to the response of the LPV controller designed without considering uncertainty obtained using the results of Theorem \ref{thm:thm3}. As shown in Fig. \ref{fig:trackingunc}, the control without considering uncertainty in the design demonstrates oscillatory behavior and higher overshoot both in the closed-loop MAP response and also in the PHP injection which are undesirable. As the results suggest, the proposed robust LPV control design is capable of compensating for the parameter uncertainty.
	
We conclude that the proposed MMSRCKF online parameter estimation method and the proposed LPV gain-scheduling control methodology demonstrates desirable closed-loop performance in terms of commanded MAP tracking and disturbance rejection under different scenarios in the presence of model parameter variations, varying time-delay, model uncertainty and disturbances.

\begin{figure}[!t] 
\hspace*{-.12in}
\centering \includegraphics[width=\columnwidth, height=2.5in]{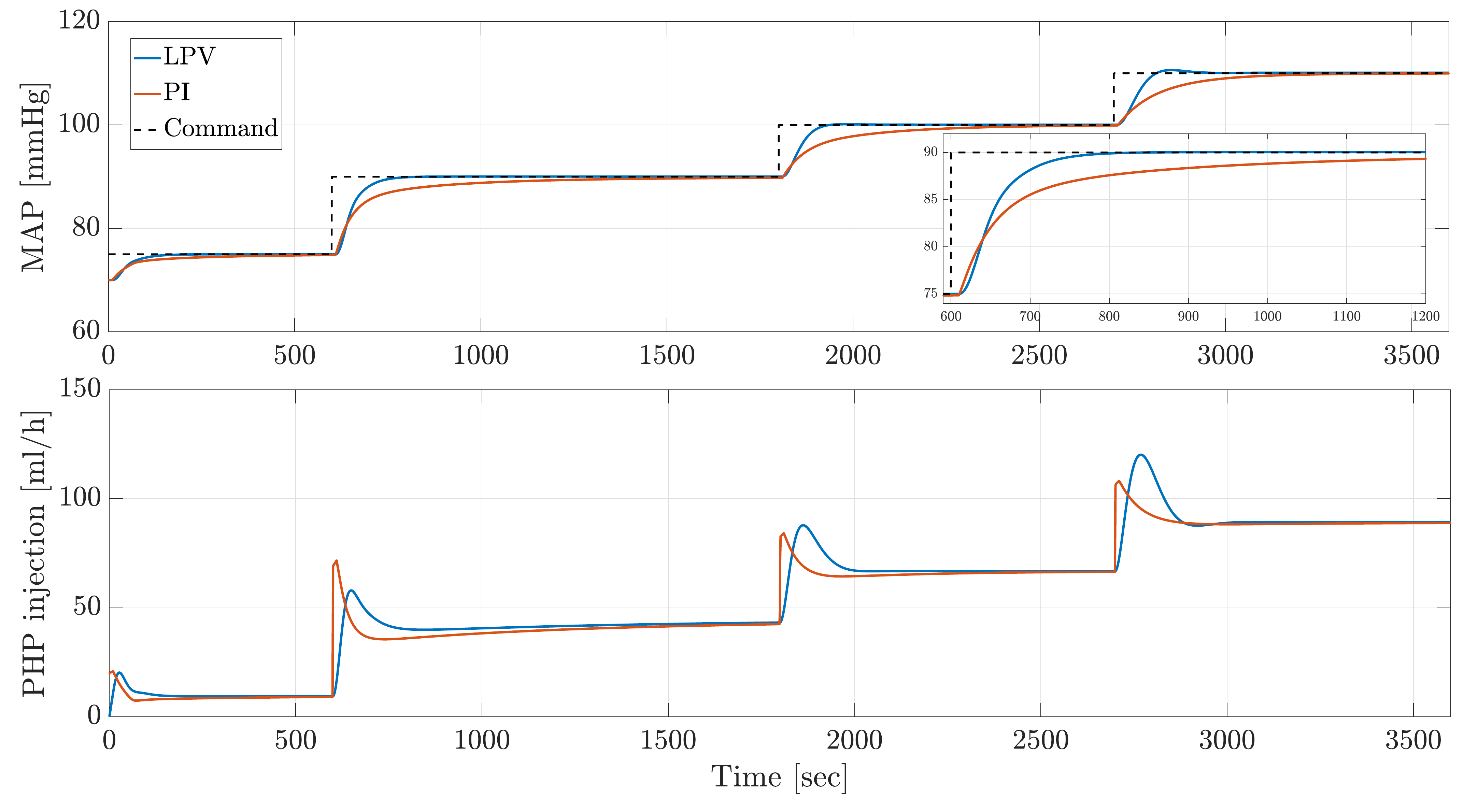}
\caption{Closed-loop MAP response and control effort of the LPV controller and the fixed structure PI controller with no disturbance and no measurement noise} 
\label{fig:fig_tracking1}
\end{figure}

\begin{figure}[!t] 
\hspace*{-.12in}
\centering \includegraphics[width=\columnwidth, height=2.0in]{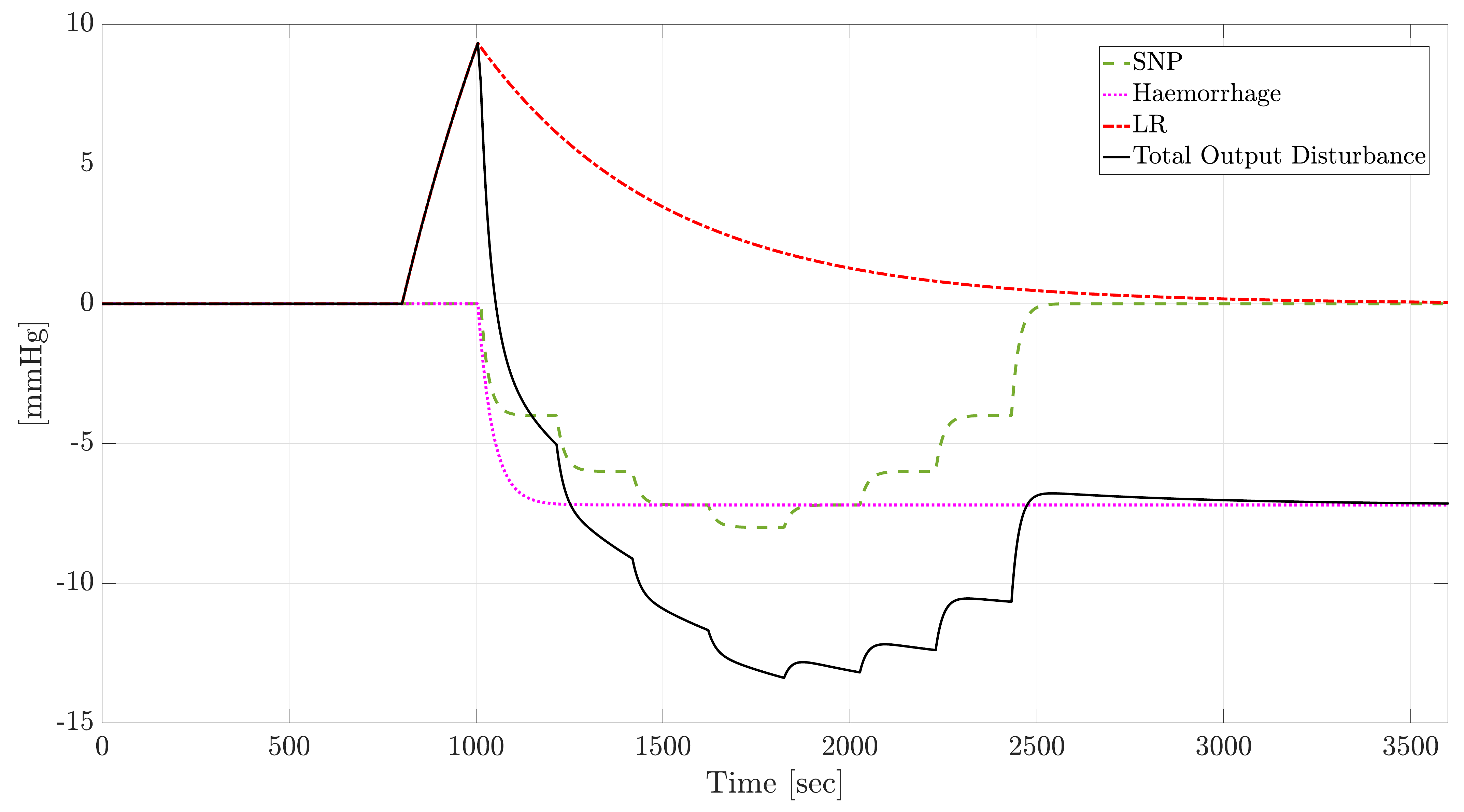}
\caption{Profile of output disturbances} 
\label{fig:Disturbance}
\end{figure}

\section{Conclusion}\label{sec:Conclusion}
Parameter estimation of a MAP dynamic model in response to  vasopressor drug infusion has been examined using a multiple-model square root cubature Kalman filtering algorithm. A first-order dynamic model with time-varying parameters and a time-varying delay is used to capture the MAP variation characteristics. The multiple-model part of the filter accomplishes the delay estimation while the Bayesian-based SRCKF part estimates the remaining four parameters, namely the sensitivity, lag-time, MAP variation as well as its baseline value at each time step using the nonlinear dynamic model. The convergence of the filter is guaranteed by considering the residuals to be zero-mean white noise and the results verify the effectiveness of this approach in comparison to experimental data. The proposed estimation is utilized in conjunction with a feedback control of drug infusion for automated MAP regulation. To this end, the design of a robust LPV output-feedback controller is addressed to track a target MAP profile in the face of model uncertainties, a varying time delay, clinically induced disturbances, and noise. Sufficient conditions for stabilization and disturbance rejection are obtained via bounding the derivative of a proposed Lyapunov-Krasovskii functional and the results are formulated in a parameter-dependent LMI setting. A nonlinear simulation model constructed using animal experiment data is used to validate the closed-loop response of the proposed robust LPV controller in regulating MAP to a target value in comparison with a fixed structure PI controller.

\begin{figure}[!t]
        \includegraphics[width=\columnwidth, height=2.5in]{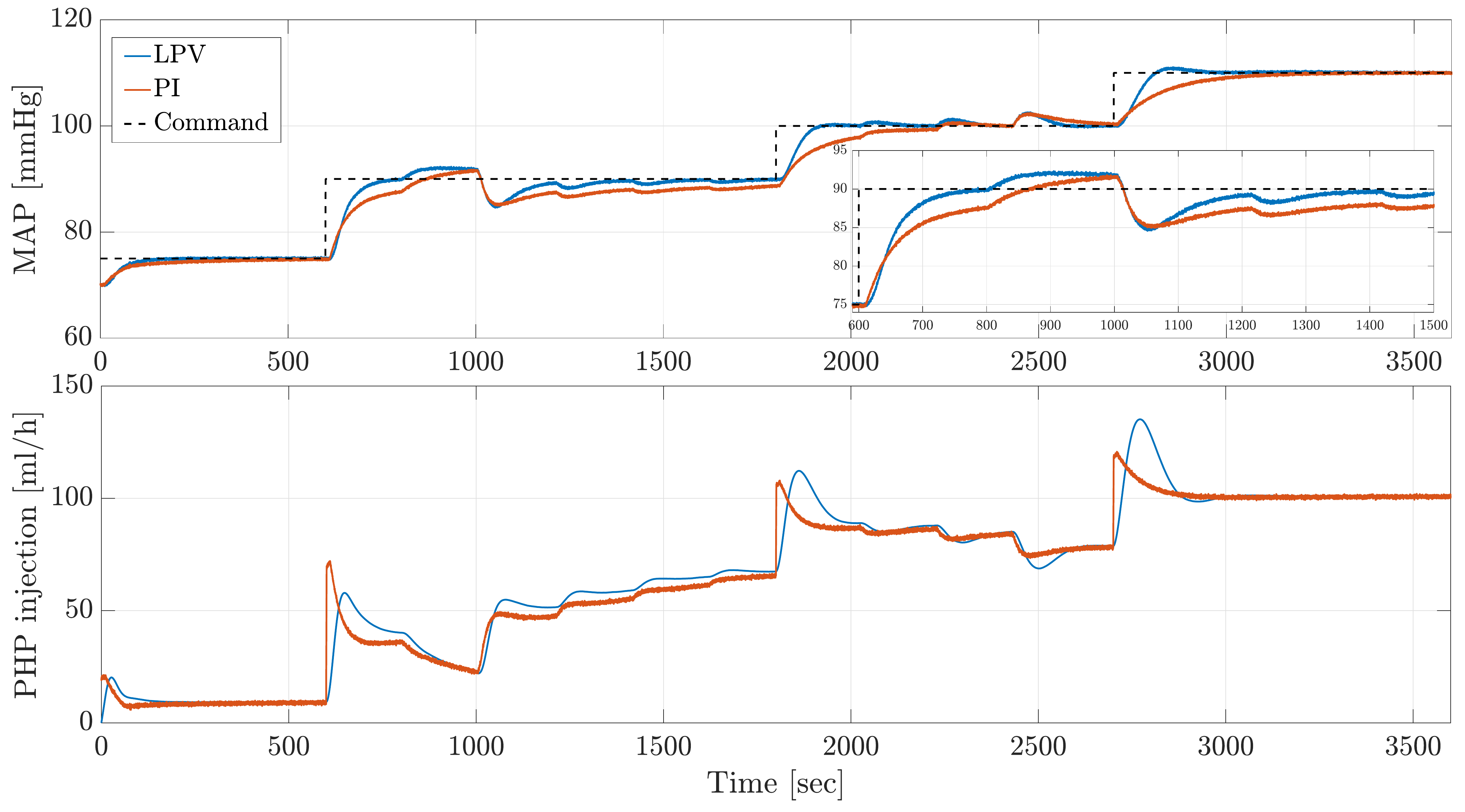}
    \caption{Closed-loop MAP response and control effort of LPV controller against fixed structure PI controller subject to disturbance and measurement noise}
            \label{fig:trackingdistnoise}

\end{figure}

\begin{figure}[!t]
        \includegraphics[width=\columnwidth, height=2.5in]{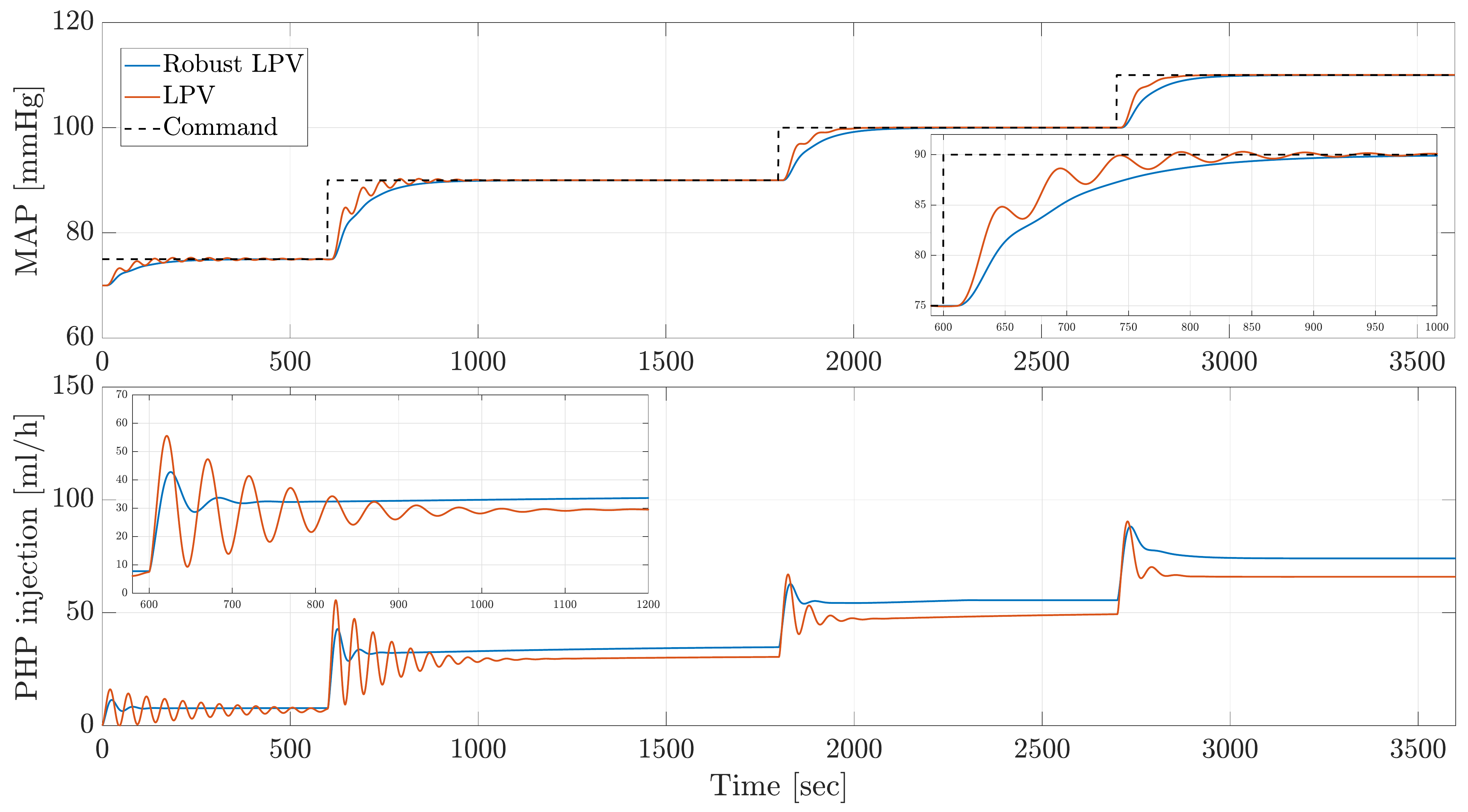}
    \caption{Closed-loop MAP response and control effort of robust LPV controller in the presence of model parameter uncertainty}
            \label{fig:trackingunc}

\end{figure}	

\section*{Acknowledgement}\label{sec:Acknow}
Financial support from the National Science Foundation under grant CMMI1437532 is gratefully acknowledged. The collaboration of the Resuscitation Research Laboratory (Dr. G. Kramer) at the University of Texas Medical Branch (UTMB), Galveston, Texas, in providing animal experiment data is gratefully acknowledged.


\end{document}